\documentclass[journal,twoside,web]{ieeecolor}
\usepackage{generic}
\usepackage{cite}
\usepackage{textcomp}

\usepackage{amsmath,amssymb,amsfonts}
\usepackage{algorithmic}
\usepackage{graphicx}
\usepackage{textcomp}
\usepackage{subcaption} 
\usepackage[T1]{fontenc}
\usepackage[utf8]{inputenc}
\usepackage{pgfplots}
\usepackage{grffile}
\pgfplotsset{compat=newest}
\usetikzlibrary{plotmarks}
\usetikzlibrary{arrows.meta}
\usepgfplotslibrary{patchplots}
\usepackage{multirow}
\usepackage{booktabs}
\usepackage{tablefootnote}   
\usepackage{amsmath,amssymb}
\usepackage{amssymb,latexsym,amsfonts,amsmath,bbm}
\usepackage{mathrsfs}
\usepackage{paralist}
\usepackage{dsfont}
\usepackage{eso-pic}
\usepackage{epsfig}
\usepackage{mathtools}
\usepackage{epstopdf}
\usepackage{lipsum}

\usepackage[nocomma]{optidef}

\usepackage{tikz}
\usetikzlibrary{calc,shapes,arrows}
\usepackage{tikz}
\usepackage{bm}
\usepackage[colorlinks=true,allcolors=blue]{hyperref}

\usepackage{algorithm}
\newcommand{\EE}{\mathds{E}}
\newcommand{\PP}{\mathds{P}}

\newcounter{theorem}
\newcounter{definition}
\newcounter{lemma}
\newcounter{claim}
\newcounter{problem}
\newcounter{proposition}
\newcounter{corollary}
\newcounter{construction}
\newcounter{example}
\newcounter{xca}
\newcounter{comments}
\newcounter{remark}
\newcounter{assumption}

\newtheorem{theorem}[theorem]{Theorem}
\newtheorem{lemma}[lemma]{Lemma}

\newtheorem{definition}[definition]{Definition}
\newtheorem{example}[example]{Example}

\newtheorem{remark}[remark]{Remark}

\usepackage{xcolor}
\usepackage{cite}
\usepackage[colorlinks=true,
allcolors=blue]{hyperref}

\makeatletter
\renewcommand\@cite[2]{\textcolor{blue}{[{#1\if@tempswa , #2\fi}]}}
\makeatother

\usepackage{etoolbox}

\makeatletter
\AtBeginDocument{%
	\pagestyle{headings}
	
	\patchcmd{\@oddhead}{\\[-19pt]}{\\[-8pt]}{}{}%
	\patchcmd{\@evenhead}{\\[-19pt]}{\\[-8pt]}{}{}%
}
\makeatother

\begin{document}
	
	\title{Safe Packetized Control for Stochastic Constrained Networked Systems}
	
	\author{
		Omid Akbarzadeh, \IEEEmembership{Student Member,~IEEE},
		MohammadHossein Ashoori, \IEEEmembership{Student Member,~IEEE},
		Mohammad H. Mamduhi, \IEEEmembership{Senior Member,~IEEE},
		and Abolfazl Lavaei, \IEEEmembership{Senior Member,~IEEE}
		\thanks{
			O. Akbarzadeh, M.H. Ashoori, and A. Lavaei are with the School of Computing, Newcastle University, United Kingdom.
			Emails:
			\texttt{o.akbarzadeh2@newcastle.ac.uk},
			\texttt{m.ashoori2@newcastle.ac.uk},\\
			\texttt{abolfazl.lavaei@newcastle.ac.uk}.
		}
		\thanks{
			M.H. Mamduhi is with the School of Computer Science, University of Birmingham, United Kingdom.
			Email: \texttt{m.h.mamduhi@bham.ac.uk}.
		}
	}
	
	\maketitle

\begin{abstract}
This work develops a formal framework for the synthesis of \emph{packetized} safety controllers for discrete-time polynomial stochastic networked control systems (dt-PSNCS) operating under communication constraints, including {uplink delays} (plant-to-controller) and {downlink packet losses} (controller-to-actuator). In this setting, the controller is deployed remotely and exchanges information with the plant over an imperfect wireless communication network. Our proposed approach treats the downlink channel as an {erasure channel}, with packet losses characterized by an independent Bernoulli process. To systematically manage both uplink delays and downlink packet loss, we first introduce a \emph{buffer} collocated with the plant that accommodates the packetized safety control (PSC) mechanism. We augment the plant and buffer states into a unified augmented-state representation that accurately captures the system evolution in the presence of communication imperfections. Our proposed framework synthesizes safety controllers based on control barrier certificates (CBCs), providing probabilistic safety guarantees that remain robust in the presence of both communication delays and packet losses. To achieve this, we reformulate the safety constraints as a sum-of-squares (SOS) optimization program, thereby facilitating the systematic construction of CBCs and their corresponding safety controllers. We validate the proposed framework through three (physical) case studies, demonstrating its effectiveness and practical applicability.
\end{abstract}

\begin{IEEEkeywords}
Packetized safety controller, communication channel, packet loss,  time delay, control barrier certificates
\end{IEEEkeywords}

\section{Introduction}\label{sec:intro}
\IEEEPARstart{M}{odern} control systems increasingly rely on wireless communication networks to connect sensors, controllers, and actuators. While this paradigm enables remote operation, reduces cabling requirements, simplifies deployment, and enhances system flexibility, it also exposes control loops to communication-induced imperfections and non-ideal transmission effects \cite{Hespanha}. Notably, network-induced imperfections, such as packet dropouts and delays, can fundamentally disrupt closed-loop behavior, as the actuator may receive a signal different from the one most recently computed by the controller. Beyond performance and stability considerations, safety is often the foremost requirement in many real-world applications. Safety constraints define the admissible operating limits of system states and control inputs, and their violation can result in hazardous operation, costly downtime, or irreversible damage \cite{adimoolam}.

Over the past decades, {control barrier certificates} (CBCs) have emerged as a powerful framework for the synthesis of safety controllers for stochastic systems, enabling the formal enforcement of safety constraints \cite{prajna2004safety}. In analogy with Lyapunov functions, a CBC is designed to satisfy prescribed conditions on the certificate itself and on its evolution along the system trajectories. By choosing an appropriate initial level set associated with a given set of initial states, CBCs separate unsafe regions from all admissible system evolutions, thereby yielding (probabilistic) safety guarantees. CBC-based techniques are widely used for formal verification and controller synthesis for both deterministic and stochastic dynamical systems ~\cite{borrmann2015control,ames2019control,santoyo2021barrier,nejati2024context,ZAKER2026113082,lavaei2024scalable,lavaei2022automated}.

While CBCs have been widely used to enforce safety constraints in stochastic control systems, most existing approaches assume uninterrupted access to feedback information. In practice, however, communication delays and packet losses can disrupt the feedback loop, causing sensor measurements to arrive late and control inputs to be unavailable at the actuator when needed, thereby hindering the timely computation and application of control actions. Several studies have investigated safety analysis for time-delayed control systems \cite{Amesdelay2023,REN2022,akbarzadeh2026safety}; however, they generally assume reliable feedback channels and do not account for packet losses in the communication links. In addition, while prior studies typically model the delay as an intrinsic component of the plant dynamics, in this work it stems from network-induced communication imperfections. The studies in~\cite{Akbarzadeh_Packet, Akbarzadeh_CoDIT} assume the absence of communication delays and consider only packet dropouts, while restricting attention to linear system dynamics. As a result, these methods are not suitable for settings in which communication delays and packet losses jointly affect controller synthesis and closed-loop performance, particularly when the plant exhibits polynomial dynamics, as considered in this paper.

Communication-induced imperfections, particularly packet losses and transmission delays, have been studied extensively in~\cite{4118454,4118476, ip-cta_20050178}. Several lines of work aim to improve reliability at the communication layer. For instance, the Glossy architecture in~\cite{5779066} exploits synchronous flooding to improve communication robustness, while other studies investigate the impact of communication imperfections on estimation accuracy and control performance. In particular, results on Kalman filtering with intermittent observations in~\cite{6004816} demonstrate that even relatively low rates of observation dropouts can significantly degrade estimation quality and, consequently, control performance. Motivated by these challenges, a variety of control strategies have been developed to mitigate the effects of communication imperfections. These include linear quadratic Gaussian control under delay-dependent constraints~\cite{8405590}, continuous-time stabilization approaches that model packet dropouts and transmission delays as time-varying input delays~\cite{yu2005stabilization}, and frameworks that explicitly capture traffic-correlated delays and packet losses~\cite{9462479}. Comprehensive overviews of networked control systems subject to communication imperfections can be found in~\cite{Hespanha}. Related works have also highlighted the importance of communication timeliness, investigating stability under repeated deadline misses~\cite{maggio_et_al} and over wireless networks~\cite{Trimpe-stability1,Trimpe-stability2}. Despite this rich body of work, the combined effects of communication delays and packet losses have not been investigated in the context of safety controller synthesis for stochastic networked control systems with polynomial dynamics, which is the focus of this work.

A related body of literature also addresses communication-aware and cross-layer design in networked control systems. In this respect, \cite{mamduhi2021delay} investigates delay-sensitive joint control, sampling, and resource management for multi-loop systems over shared networks, while \cite{mamduhi2025networkaware} develops a network-aware sampling and control framework for dynamic networks with memory. Building on these advances, \cite{klugel2019aoi} shows that several estimation objectives can be formulated through age-of-information penalty functions, employing threshold-based optimal transmission over a lossy single link. Furthermore, recent studies also examine wireless control under channel modeling and packet-dropout compensation \cite{ zacchialun2025optimal,zacchialun2025wireless}. Additionally, \cite{jungers2018observability} analyzes structural properties of linear systems subject to constrained data losses, including observability, controllability, and stabilizability. Despite accounting for communication delays, packet losses, and channel-awareness mechanisms with increasing sophistication, these studies primarily focus on \emph{linear} systems and address objectives such as performance optimization, regulation, and structural analysis, rather than the synthesis of \emph{safety} controllers for stochastic \emph{polynomial} systems.

A further related direction concerns estimation, stabilization, and delay compensation over unreliable links. Specifically, \cite{dey2014remote} studies remote estimation under packet loss, quantization, and measurement noises, while \cite{leong2008kalman} shows that Kalman smoothing under random packet loss can improve probabilistic estimation performance. Beyond estimation, \cite{maass2020stochastic} analyzes nonlinear wireless networked control systems over multiple lossy channels and derives sufficient conditions for \(\mathcal{L}_p\)-stability-in-expectation. Event-triggered sampling methods have also been proposed to mitigate communication-induced imperfections, such as delays and packet losses, thereby improving control performance in both single-loop and multi-loop linear networked control systems \cite{BALAGHII201858,QU2015974,mamduhi:MTNS2014}. More recently, \cite{dhullipalla2025event} develops an event-triggered predictor-based controller for linear networked systems with known input delays. Despite these valuable contributions, they do not address \emph{safety controller synthesis} for \emph{stochastic polynomial} systems operating under the \emph{combined effects} of uplink delays and downlink packet dropouts.

Packetized predictive control has emerged as an effective strategy for mitigating packet dropouts by storing a sequence of future control inputs at the actuator and applying them when new control packets are unavailable. For instance, \cite{QuevedoNesic2010NOLCOS} establishes stochastic stability of nonlinear systems over erasure channels without acknowledgments using a Lyapunov-based analysis. The work in \cite{QuevedoNesic2011ISS} further establishes input-to-state stability for disturbed and constrained nonlinear systems subject to bounded packet-loss bursts through appropriate controller design and tuning. For stochastic linear time-invariant systems subject to packet loss and bit-rate constraints, the integration of packetized predictive control with entropy-coded dithered lattice quantization leads to a Markov jump linear system representation, which enables tractable analysis of stability and performance trade-offs~\cite{QuevedoOstergaardNesic2011BitRate}. The proposed framework draws inspiration from the packetized predictive control paradigm and extends it to the synthesis of safety controllers with formal guarantees in the presence of network-induced communication imperfections.

\begin{figure*}[t!]
	\centering
	\includegraphics[width=0.95\textwidth]{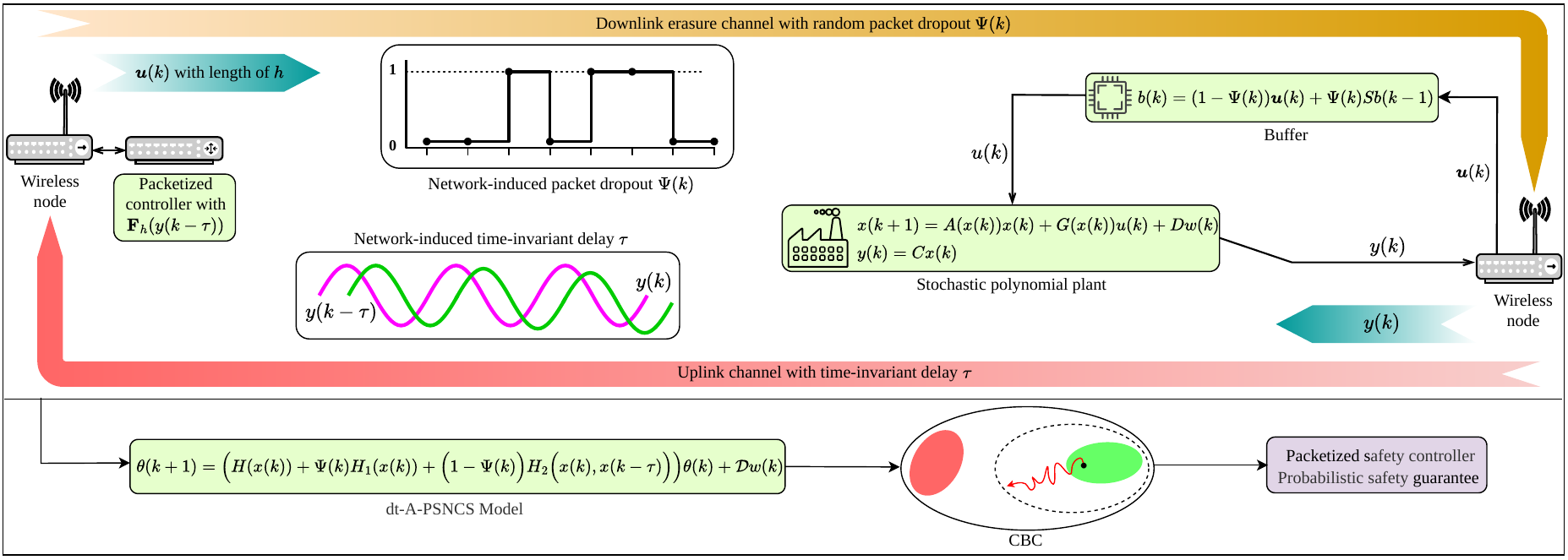}
	\caption{The architecture of the proposed framework for packetized safety controller synthesis in networked control systems. The packetized controller communicates with the plant over imperfect communication channels subject to an uplink time-invariant delay $\tau$ and packet losses in the downlink channel. Packet losses are modeled by a Bernoulli random variable $\Psi(k)$, while $w(k)$ denotes the plant process noise. In the CBC illustration, the green and red regions represent the initial and unsafe sets, respectively, while the dashed curve denotes the barrier level set associated with the initial set.
	}
	\label{fig1}
\end{figure*}

\textbf{Key contributions.} Motivated by the aforementioned challenges, this work presents a formal framework for synthesizing a packetized safety control (PSC) for discrete-time polynomial stochastic networked control systems (dt-PSNCS) under communication constraints. The framework specifically addresses delays in the plant-to-controller (uplink) channel and probabilistic packet loss in the controller-to-actuator (downlink) channel (see Fig.~\ref{fig1}). Communication imperfections are explicitly incorporated by jointly accounting for uplink delays and downlink packet losses, where the latter are modeled through an erasure channel governed by a Bernoulli random process. A buffer collocated with the plant is introduced to implement the PSC mechanism, whereby the controller transmits a packet containing a sequence of $h$ candidate control inputs, corresponding to a prediction horizon of length $h$, which are stored in the buffer and sequentially applied during packet dropouts. We further provide a guideline for selecting $h$ by deriving an upper bound on the probability of observing a consecutive dropout burst of length $h$ over the finite safety horizon. The plant and buffer states are then augmented into a unified state-space representation that captures the evolution of both the plant state and the applied control input in the presence of communication imperfections. Building upon this augmented framework, CBCs are employed to synthesize safety controllers and establish probabilistic safety guarantees through a set of SOS conditions.

A limited subset of this work was recently presented in \cite{Akbarzadeh_Networked}, with the framework developed in this paper extending those results in two key directions. First, we move beyond the restrictive class of linear systems considered in \cite{Akbarzadeh_Networked} and develop a safety framework for discrete-time stochastic \emph{polynomial} systems, thereby accommodating a richer class of nonlinear dynamics. Second, while \cite{Akbarzadeh_Networked} also focuses on safety analysis under communication delays, the present work introduces a PSC synthesis framework that explicitly addresses downlink packet losses in addition to uplink delays. This enables the synthesis of safety controllers for stochastic polynomial networked control systems operating over unreliable communication channels. Furthermore, unlike the approach in \cite[Section III-A]{Akbarzadeh_Networked}, which relies on a conservative treatment of the bilinear terms arising from products of decision variables, the proposed formulation eliminates this source of conservatism, leading to less conservative synthesis conditions and improved controller design capabilities.

\textbf{Organization.} The remainder of the article is organized as follows. Section~\ref{sec:Description} introduces the networked control system and communication model while presenting the related definitions and concepts. Section~\ref{sec: CBC} develops the CBC framework, establishes a finite-horizon probabilistic safety guarantee, reformulates the resulting safety conditions as an SOS optimization problem, and presents the corresponding PSC synthesis algorithm. Section~\ref{sec: Case} validates the proposed framework through three case studies, while Section~\ref{sec: Conclusion} concludes the paper.

\textbf{Notation.} The sets of real numbers, positive real numbers, and non-negative real numbers are denoted by $\mathbb{R}$, $\mathbb{R}^{+}$, and $\mathbb{R}_{0}^+$, respectively, while $\mathbb{N}$ and $\mathbb{N}^{+}$ denote the sets of non-negative and positive integers. For two sets $Q$ and $V$, $Q \times V$ denotes their Cartesian product, and for $n \in \mathbb{N}^+$, $Q^{n+1}$ represents the $(n+1)$-fold Cartesian product of $Q$ with itself. For $S \subseteq Q$, $Q \backslash S$ denotes the set of elements of $Q$ not contained in $S$, while $\emptyset$ represents the empty set. For a system $\Delta$ and a property $\Upsilon$, $\Delta \models \Upsilon$ indicates that $\Delta$ satisfies $\Upsilon$. Given $N$ vectors $x_i \in \mathbb{R}^{n_i}$, $x\!=\![x_1;\dots;x_N]$ denotes the concatenated column vector, $\mathbf{0}_{n}$ represents the zero vector of length $n$, and $\Vert x \Vert$ indicates the Euclidean norm of $x \in \mathbb{R}^n$. The transpose of a matrix $A$ is denoted by $A^\top$, the identity matrix of dimension $n$ by $\mathbf{I}_n$, and the zero matrix of dimension $n \times m$ by $\mathbf{0}_{n\times m}$. The trace of $A \in \mathbb{R}^{N \times N}$ with diagonal elements $(a_1, \dots , a_N)$ is $\mathsf{Tr}(A)\!=\!\sum_{i=1}^N a_i$. For a symmetric matrix $P$, $P \succ 0$ ($P \succeq 0$) denotes that $P$ is positive definite (positive semi-definite), and $(\star)$ represents the entry obtained by transposing the corresponding symmetric position. The symbol $\mathcal{N}(\mu,\Sigma)$ denotes the normal distribution with mean $\mu$ and covariance matrix $\Sigma$.

\section{Networked Control System with Communication Model}\label{sec:Description}
\subsection{Discrete-Time Polynomial Stochastic Plant Model}

A discrete-time stochastic polynomial system is considered as the primary plant model throughout this work.
\begin{definition}[\textbf{Plant Model}]\label{Plant_model}
	We consider a discrete-time polynomial stochastic plant model described via
	\begin{equation}\label{Plant}
		\begin{cases}
			x(k+1) = A(x(k))x(k) + G(x(k))u(k) + Dw(k),\\[2pt]
			y(k)   = Cx(k), \quad k \in \mathbb{N},
		\end{cases}
	\end{equation}
	where \(x\in\mathbb{X}\subset \mathbb{R}^n\), \(u\in\mathbb{U}\subset \mathbb{R}^m\), and \(y\in\mathbb{Y}\subset \mathbb{R}^q\) denote the state, input, and output, respectively, with \(\mathbb{X}\), \(\mathbb{U}\), and \(\mathbb{Y}\) being Borel spaces of admissible values. Moreover, $A:\mathbb{X}\to\mathbb{R}^{n\times n}, G:\mathbb{X}\to\mathbb{R}^{n\times m}$ are polynomial matrix-valued functions, while \(C\in\mathbb{R}^{q\times n}\) and \(D\in\mathbb{R}^{n\times n}\) are constant matrices. The process noise sequence $\{w(k)\}_{k\in\mathbb{N}}$, with $w(k)\in\mathbb{R}^n$, is assumed to be independent and identically distributed (\emph{i.i.d.}), with $w(k)\sim\mathcal{N}(\mathbf{0}_n,\mathbf{I}_n)$ for all $k\in\mathbb{N}$. $\hfill\square$
\end{definition}
We now formalize the communication model between the sensor, the remote controller, and the plant actuator (cf.~Fig.~\ref{fig1}). Specifically, the uplink channel, which transmits output measurements from the sensor to the controller, is assumed to be subject to a time-invariant delay. The downlink channel, which conveys control packets from the controller to the actuator, is modeled as an erasure channel subject to random packet losses. Together, these uplink and downlink constraints define the communication architecture considered in this work. The two channels are affected by different communication impairments and therefore exhibit distinct characteristics. In typical wireless networked control systems, sensor nodes are often located closer to the base station, making uplink communication more susceptible to delays arising from data traffic management while generally experiencing fewer packet losses. In contrast, downlink transmissions are often directed to receivers located farther from the base station, rendering them more vulnerable to packet losses while typically being less affected by delays. Consequently, modeling the uplink as delay-dominated and the downlink as loss-dominated captures the predominant communication challenges encountered in each channel (\emph{e.g.,} \cite{BENGTSSON,5G-mmWave}).

The corresponding erasure-channel model for the downlink communication link is presented in the next subsection.

\subsection{Downlink Erasure Channel Model}\label{Erasure}
This framework considers a clock-driven, Ethernet-like communication network connecting the controller and the plant, in which all information is transmitted through time-stamped packets. Since transmission errors, network congestion, power limitations, and communication distance can lead to packet dropouts, the downlink channel is modeled as an erasure channel operating synchronously with the sampling rate of the plant in~\eqref{Plant}. The packet-loss process is modeled by a Bernoulli random variable $\Psi(k)$, $k\in\mathbb{N}$, defined as follows:
\begin{equation*}
	\Psi(k) :=
	\begin{cases}
		1, & \text{if a packet dropout occurs at time } k,\\[2pt]
		0, & \text{otherwise.}
	\end{cases}
\end{equation*}
We assume that the sequence $\{\Psi(k)\}_{k\in\mathbb{N}}$  is \emph{i.i.d.} with
\begin{equation}\label{eq:bernoulli}
	\PP\big[\Psi(k)=1\big]=p,\qquad \PP\big[\Psi(k)=0\big]=1-p,
\end{equation}
where $p\in(0,1)$ denotes the packet-dropout probability.

\subsection{Packetized Control and Buffering Mechanism}
While the erasure channel model captures random packet dropouts, the actuator still requires an input at every sampling instant. To maintain continuous closed-loop operation under dropouts, we utilize a plant-side buffer that stores a packetized sequence of control inputs. Inspired by \cite{QuevedoNesic2011ISS}, we adopt a PSC buffer mechanism in which, at each time instant $k$, the controller computes and transmits an input packet $\boldsymbol{u}(k)\in\mathbb{R}^{mh}$ given by
\begin{equation*}
	\boldsymbol{u}(k)=\underbrace{\big[
		\nu_k(k); \nu_k(k+1); \cdots; \nu_k(k+h-1)
		\big]}_{(a)},
\end{equation*}
where, in $(a)$, the subscript $k$ denotes the time instant at which the control inputs are computed, while the argument specifies the time instant at which each input is intended to be applied. The vectors $\nu_k(k+i-1)\in\mathbb{R}^m$, for all $i\in\{1,\ldots,h\}$, represent the $h$ candidate control inputs computed at time $k$ and intended for application over the subsequent $h$ time steps; specifically, $\nu_k(k)$ is intended for application at time $k$, $\nu_k(k+1)$ at time $k+1$, and so on, up to $\nu_k(k+h-1)$ at time $k+h-1$. Upon successful reception of a control packet at time $k$, the actuator-side buffer is updated with the newly received sequence of planned inputs; thereafter, the buffer releases one input at each time step until another packet is successfully received, at which point its contents are overwritten by the latest sequence.

Formally, the buffer state is denoted by $b(k)\in\mathbb{R}^{mh}$ and evolves according to a \emph{receive-or-shift} update rule
\begin{equation}\label{eq:buffer-update}
	b(k)=\big(1-\Psi(k)\big)\boldsymbol{u}(k)+\Psi(k)\,S\,b(k-1),
\end{equation}
which can be rewritten as
\begin{equation*}
	b(k)=
	\begin{cases}
		\boldsymbol{u}(k), & \Psi(k)=0,\\[1mm]
		S\,b(k-1), & \Psi(k)=1.
	\end{cases}
\end{equation*}
The shift operator $S\in\mathbb{R}^{mh\times mh}$ is a parallel-in/serial-out left-shift matrix that advances the buffer by one block (of size \(m\)): \((i{+}1)\)-th block is shifted into the \(i\)-th position, while the final block is retained, \emph{i.e.,}
\begin{equation}\label{eq:S-matrix}
	S=
	\begin{bmatrix}
		\mathbf{0}_{m\times m} & \mathbf{I}_m & \mathbf{0}_{m\times m} & \cdots & \mathbf{0}_{m\times m}\\
		\mathbf{0}_{m\times m} & \mathbf{0}_{m\times m} & \mathbf{I}_m & \cdots & \mathbf{0}_{m\times m}\\
		\vdots & \vdots & \ddots & \ddots & \vdots\\
		\mathbf{0}_{m\times m} & \mathbf{0}_{m\times m} & \cdots & \mathbf{0}_{m\times m} & \mathbf{I}_m\\
		\mathbf{0}_{m\times m} & \mathbf{0}_{m\times m} & \cdots & \mathbf{0}_{m\times m} & \mathbf{I}_{m}
	\end{bmatrix}_{mh \times mh}\!\!\!\!\!\!\!\!\!\!\!\!\!\!\!\!\!\!\!\!.
\end{equation}
The applied input is extracted from the first block of the buffer according to
\begin{equation}
	u(k)= \Phi b(k),
	\qquad
	\Phi =\begin{bmatrix} \mathbf{I}_m & \mathbf{0}_{m \times m} & \cdots & \mathbf{0}_{m \times m} \end{bmatrix}\!\!.
	\label{eq:applied}
\end{equation}
Consequently, during a burst of dropouts, the buffer continues to shift and the actuator applies the next stored block at each step. As long as the packet-loss burst remains shorter than $h$ consecutive time steps, the buffer never becomes empty, and each lost packet is compensated by a previously stored candidate input from the most recently received control packet; any unused entries are discarded when a newer packet is successfully received. If a packet-loss burst persists for $h$ or more consecutive time steps, the buffer eventually exhausts all predicted inputs and, under the shift operator $S$ in \eqref{eq:S-matrix}, reaches its final block, causing the actuator to repeatedly apply the corresponding control input until a new packet is successfully received. 

The following example illustrates the role of the projection matrix $\Phi$ in \eqref{eq:applied}, which extracts the applied control input from the buffer, and the shift operator $S$ in \eqref{eq:S-matrix}, which updates the buffer during consecutive packet dropouts to ensure that a valid control input remains available until a new packet is successfully received.

\begin{example}[\textbf{Buffer Operation}]
	Consider a scalar input with prediction horizon $h=3$. Suppose that at time $k$, a packet
	$\boldsymbol{u}(k)=[\nu_k(k);\,\nu_k(k+1);\,\nu_k(k+2)]$ is successfully received. Then, the buffer is loaded as
	\begin{equation*}
		b(k)=\begin{bmatrix} \nu_k(k)\\ \nu_k(k+1)\\ \nu_k(k+2)\end{bmatrix}\!\!,
		\quad
		u(k)=\Phi b(k)= \nu_k(k).
	\end{equation*}
	If no new packet arrives at time $k{+}1$, the buffer updates according to the left-shift rule as
	\begin{align*}
		b(k{+}1) &=S\,b(k)=\begin{bmatrix} \nu_k(k+1)\\ \nu_k(k+2)\\ \nu_k(k+2)\end{bmatrix}\!\!,\\
		u(k{+}1) &=\Phi b(k{+}1)= \nu_k(k+1).
	\end{align*}
	If another dropout occurs at time $k{+}2$, one obtains
	\begin{align*}
		b(k{+}2) &= S\,b(k{+}1)\!=\!\begin{bmatrix} \nu_k(k+2)\\\nu_k(k+2)\\\nu_k(k+2)\end{bmatrix}\!\!,\\
		u(k{+}2)&=\Phi b(k{+}2)= \nu_k(k+2).
	\end{align*}
	Therefore, after $h-1$ consecutive packet dropouts, the buffer reaches its final stored block under the shift operator $S$ in \eqref{eq:S-matrix}, causing $u(\cdot)$ to repeatedly apply the corresponding control input until a new packet is successfully received.$\hfill\square$
\end{example}

\subsection{Networked Control System Model}\label{subsec:ncs-model}
Combining the plant dynamics in~\eqref{Plant} with the buffer update rule in~\eqref{eq:buffer-update},
the resulting dt-PSNCS can be formalized as follows. 
\begin{definition}[\textbf{dt-PSNCS}]\label{def: dt-SNCS}
	A discrete-time polynomial stochastic networked control system is defined by
	\begin{equation}\label{NCS}
		\Lambda :=
		\begin{cases}
			x(k{+}1) = A(x(k))x(k) + G(x(k))\Phi b(k) + Dw(k),\\[0.3em]
			y(k)     = Cx(k),\\[0.3em]
			b(k)     = (1-\Psi(k))\boldsymbol{u}(k)
			+ \Psi(k)S\,b(k{-}1),
		\end{cases}
	\end{equation}
	where at \(k=0\), we set \(b(-1)=\mathbf{0}_{mh}\), and define the initial state
	history as $\mathbf{x}(0) := [x(0);x(-1);\ldots;x(-\tau)].$$\hfill\square$
\end{definition}
Due to the network-induced uplink delay $\tau$, the controller accesses the most recent measurement $y(k-\tau)$, based on which it constructs a packet of $h$ candidate control inputs using the following output-feedback control law
\begin{equation}\label{eq:ncs-kappa}
	\boldsymbol{u}(k)=\mathbf{F}_h(y(k{-}\tau))\,y(k{-}\tau),\quad \forall k\in\mathbb{N},
\end{equation}
where $\boldsymbol{u}(k)=\big[ \nu_k(k);\ldots;\nu_k(k+h-1)\big]\in\mathbb{R}^{mh}$ and
\begin{equation*}
	\mathbf{F}_h(y(k{-}\tau)):= \big[F_0(y(k{-}\tau));\ldots;F_{h-1}(y(k{-}\tau))\big]\in \mathbb{R}^{mh\times q}
\end{equation*}
is a polynomial matrix to be designed. 

We augment the current plant state $x(k)$, the delayed state history $\mathbf{x}_d(k)$, and the previous buffer content $b(k{-}1)$ into the augmented state
\begin{equation}\label{eq:ncs-theta}
	\theta(k):=[x(k);\mathbf{x}_d(k);b(k{-}1)]\in \Theta \subseteq \mathbb{R}^{\,n(\tau+1)+mh},
\end{equation}
where $\kappa := n(\tau+1)+mh,$ and $\mathbf{x}_d(k):=[x(k{-}1);\cdots;x(k{-}\tau)] \in\mathbb{R}^{n\tau},$ and
$\Theta := \mathbb{X}^{\tau+1}\times\mathbb{U}^h.$ Hence, using~\eqref{NCS}--\eqref{eq:ncs-kappa}, the augmented dynamics are given
by
\begin{equation}\label{eq:ncs-theta-update}
	\theta(k{+}1)
	\!=\!
	\begin{bmatrix}
		A(x(k))x(k)+G(x(k))\Phi b(k)+Dw(k)\\
		x(k)\\
		\vdots\\
		x(k{-}\tau{+}1)\\
		b(k)
	\end{bmatrix}\!\!.
\end{equation}
Equivalently,~\eqref{eq:ncs-theta-update} can be written in the form of a discrete-time augmented polynomial stochastic networked control system (dt-A-PSNCS) as 
\begin{align}\notag
	\Delta\!:\theta(k{+}1) &=\! \! \big[H(x(k)) \!+\! \Psi(k)H_1(x(k)) \\\label{eq:theta-affine}  &+ (1-\Psi(k))H_2\big(x(k),x(k{-}\tau))\big]\theta(k) \!+\! \mathcal{D}w(k),
\end{align}
where $H(x(k)),\,H_1(x(k)),\,H_2(x(k),x(k-\tau))
\in
\mathbb{R}^{\kappa\times\kappa}.$
	To construct such matrices, we first introduce
	\begin{equation*}
		E_{1}
		:=
		\begin{bmatrix}
			\mathbf{I}_n\\[1mm]
			\mathbf{0}_{n(\tau-1)\times n}
		\end{bmatrix}\!\!,
		\qquad
		E_{2}
		:=
		\begin{bmatrix}
			\mathbf{0}_{n \times n(\tau-1)} & \mathbf{0}_{n \times n} \\ \mathbf{I}_{n(\tau-1)} & \mathbf{0}_{n(\tau-1) \times n}
		\end{bmatrix}\!\!,
	\end{equation*}
	using which the nominal part of the augmented
	dynamics, independent of the buffer $b(k)$ and the packetized controller $	\boldsymbol{u}(k)$,
	is represented by
	\begin{equation}\label{eq:H}
		H(x(k))
		=
		\begin{bmatrix}
			A(x(k))
			& \mathbf{0}_{n\times n\tau}
			& \mathbf{0}_{n\times mh}\\[1mm]
			E_{1}
			& E_{2}
			& \mathbf{0}_{n\tau\times mh}\\[1mm]
			\mathbf{0}_{mh\times n}
			& \mathbf{0}_{mh\times n\tau}
			& \mathbf{0}_{mh\times mh}
		\end{bmatrix}\!\!.
	\end{equation}
	The contribution associated with packet dropouts (\emph{i.e.,} $\Psi(k)=1$) is captured by $H_1(x(k))$, which describes the evolution of the closed-loop system when no new control packet is received and the actuator-side buffer is updated according to the shift matrix $S$:
	\begin{equation}\label{eq:H1}
		H_1(x(k))
		=
		\begin{bmatrix}
			\mathbf{0}_{n\times n(\tau + 1)}
			& G(x(k))\Phi S\\[1mm]
			\mathbf{0}_{n\tau\times n(\tau + 1)}
			& \mathbf{0}_{n\tau\times mh}\\[1mm]
			\mathbf{0}_{mh\times n(\tau + 1)}
			& S
		\end{bmatrix}\!\!.
	\end{equation}
	The successful-transmission component of the augmented dynamics (\emph{i.e.,} $\Psi(k)=0$) is captured by the matrix $H_2(x(k),x(k{-}\tau))$, defined as
	\begin{align}\notag
		&H_2(x(k),x(k{-}\tau))\\\label{eq:H2}
		&=
		\begin{bmatrix}
			\mathbf{0}_{n\times n\tau}
			&
			G(x(k))\Phi \mathbf{F}_h(Cx(k{-}\tau)) C
			&
			\mathbf{0}_{n\times mh}\\[1mm]
			\mathbf{0}_{n\tau\times n\tau}
			&
			\mathbf{0}_{n\tau\times n}
			&
			\mathbf{0}_{n\tau\times mh}\\[1mm]
			\mathbf{0}_{mh\times n\tau}
			&
			\mathbf{F}_h(Cx(k{-}\tau)) C
			&
			\mathbf{0}_{mh\times mh}
		\end{bmatrix}\!\!,
	\end{align}
   which can be factorized with respect to the
	polynomial controller gain \(\mathbf{F}_h\) as
	\begin{equation}\label{eq:H2-decomposition}
		H_2(x(k),x(k{-}\tau))
		=
		\bar{G}(x(k))\mathbf{F}_h(Cx(k{-}\tau))\bar C,
	\end{equation}
	where \(\bar C\in\mathbb{R}^{q\times\kappa}\) selects the delayed measured
	output from the augmented state, \emph{i.e.,}
    \begin{align} \label{eq: Cbar}
    	\bar C = \begin{bmatrix} \mathbf{0}_{q\times n\tau} & C & \mathbf{0}_{q\times mh} \end{bmatrix}\!\!, \quad \bar C\theta(k)=Cx(k-\tau),
    \end{align}
	whereas \(\bar{G}(x(k))\in\mathbb{R}^{\kappa\times mh}\) characterizes how the successfully received input sequence enters the augmented dynamics:
	\begin{equation*}
		\bar{G}(x(k))
		=
		\begin{bmatrix}
			G(x(k))\Phi\\[1mm]
			\mathbf{0}_{n\tau\times mh}\\
			\mathbf{I}_{mh}
		\end{bmatrix}\!\!.
\end{equation*}
	Since the process noise enters the dynamics only through the plant state equation, its contribution to the augmented-state dynamics is represented by
\begin{equation*}
	\mathcal{D}
	=
	\begin{bmatrix}
		D\\[1mm]
		\mathbf{0}_{(n\tau+mh)\times n}
	\end{bmatrix}\!\!.
\end{equation*}
We represent dt-A-PSNCS in~\eqref{eq:theta-affine} using the tuple $\Delta=(\mathbb{X}, \mathbb{U}, \mathbb{Y}, A, G, C,D,\tau, p)$. For any initial augmented state $\theta(0) \in \Theta$, the random sequence $\theta_s$ satisfying \eqref{eq:theta-affine} under an input signal $u(\cdot)$ is called the solution process of $\Delta$. We now present the formal definition of safety for dt-A-PSNCS.
\begin{definition}[\textbf{Safety Property}]\label{safety}
	Given a dt-A-PSNCS $\Delta=(\mathbb{X}, \mathbb{U}, \mathbb{Y}, A, G, C,D, \tau, p)$ in~\eqref{eq:theta-affine}, consider a safety specification $\Upsilon=(\mathbb X_a, \mathbb X_b, \mathcal{T})$, where $ \mathbb X_a, \mathbb X_b \allowbreak  \subset \mathbb X$ are the initial and unsafe sets of the plant model in \eqref{Plant}, respectively, where $\mathbb X_a \cap  \mathbb X_b=\emptyset$. The dt-A-PSNCS is said to be safe within the time horizon $\mathcal{T} \in \mathbb{N}$, denoted by $\Delta \models \Upsilon$, if for every initial state $\theta(0) \in \Theta_a := \mathbb X_a^{\tau+1} \times \mathbb U^h$, the corresponding trajectory satisfies $\theta(k) \notin \Theta_b := \mathbb X_b \times (\mathbb X \setminus \mathbb X_b)^\tau \times \mathbb U^h$ for all $k \in \{1,\dots,\!\mathcal{T}\}$. Since trajectories of $\Delta$ evolve stochastically, the aim is to formally compute $\PP \{\Delta \models \Upsilon\} \geq 1-\rho$, where $\rho \in(0,1]$ denotes the safety violation probability.$\hfill\square$
\end{definition}

It is clear that $\mathbb{X}_a \cap \mathbb{X}_b = \emptyset$, implies the disjointness of the augmented sets, $\Theta_a = \mathbb X_a^{\tau+1} \times \mathbb U^h$ and $\Theta_b = \mathbb X_b \times (\mathbb X \setminus \mathbb X_b)^\tau \times \mathbb U^h$, \emph{i.e.,} $\Theta_a\cap\Theta_b=\emptyset$. Indeed, every $\theta\in\Theta_a$ has its current state in $\mathbb{X}_a$, whereas every $\theta \in \Theta_b$ has its current state in $\mathbb{X}_b$; since these two sets are disjoint, no augmented state can belong to both $\Theta_a$ and $\Theta_b$.
\begin{remark}
	A state history trajectory can avoid the set $\mathbb{X}_b^{\tau+1}$ while still violating safety, as this set only characterizes histories where \emph{all} $\tau + 1$ components of the history are in the unsafe set $\mathbb{X}_b$. The set $\mathbb{X}_b \times (\mathbb{X} \setminus \mathbb{X}_b)^\tau$, however, considers the case where the current state enters the unsafe set while all previous $\tau$ states were safe. In other words, it captures the instances in which the first safety violation occurs. Since the initial state history is entirely safe (\emph{i.e.,} in $\mathbb{X}_a^{\tau+1}$ with $\mathbb{X}_a \cap \mathbb{X}_b = \emptyset$), guaranteeing that the augmented trajectory never enters $\Theta_b$ prevents this first violation from occurring, which implies that the system remains outside $\mathbb{X}_b$ for all time.$\hfill\square$
\end{remark}
To guide the selection of the control input prediction horizon $h$, we present Lemma~\ref{lem:run_bernoulli}, which provides an upper bound on the probability of observing at least one run of $h$ consecutive packet dropouts within the finite safety horizon $\mathcal{T}$ under the i.i.d. Bernoulli erasure channel model (cf. Subsection~\ref{Erasure}). This result offers a practical criterion for choosing $h$ with $h<\mathcal{T}$.
\begin{lemma}[\textbf{Input Prediction Horizon $h$}]\label{lem:run_bernoulli}
	Consider a dt-A-PSNCS in \eqref{eq:theta-affine}, where $\Psi(k)\in\{0,1\}$ denotes the downlink packet-dropout indicator at time step $k$, as defined in \eqref{eq:bernoulli}. For a finite safety horizon $\mathcal{T}\in\mathbb{N}$ and a prediction horizon $h\in\{1,\ldots,\mathcal{T}-1\}$, the event of observing at least one run of $h$ consecutive packet dropouts within the interval $\{0,1,\ldots,\mathcal{T}\}$ is given by
	\begin{align*}
		\mathcal{E}_{h,\mathcal{T}}&
		:=\Big\{\exists\, k\in\{0,\dots,\mathcal{T}-h+1\}\\
		& \text{s.t.}\
		\Psi(k)=\Psi(k+1)=\cdots=\Psi(k+h-1)=1\Big\}.
	\end{align*}
	Then
	\begin{equation}
		\PP(\mathcal{E}_{h,\mathcal{T}})\le (\mathcal{T}-h+2)\,p^h.
		\label{eq:run_bound_k_ET_final}
	\end{equation}
\end{lemma}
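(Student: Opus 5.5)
The plan is to use a union bound over the possible starting indices of a dropout run. First, I will write the event $\mathcal{E}_{h,\mathcal{T}}$ as a finite union
\begin{equation*}
	\mathcal{E}_{h,\mathcal{T}}=\bigcup_{k=0}^{\mathcal{T}-h+1} A_k,\qquad A_k:=\big\{\Psi(k)=\Psi(k+1)=\cdots=\Psi(k+h-1)=1\big\}.
\end{equation*}
Here $A_k$ is the event that a run of $h$ consecutive dropouts starts at time $k$. The index set $\{0,\dots,\mathcal{T}-h+1\}$ is exactly the one appearing in the definition of $\mathcal{E}_{h,\mathcal{T}}$, and it contains $\mathcal{T}-h+2$ elements. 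The assumption $h\le\mathcal{T}-1$ guarantees that this set is nonempty.

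Second, I will compute $\PP(A_k)$ for each fixed $k$. The sequence $\{\Psi(k)\}_{k\in\mathbb{N}}$ is i.i.d.\ with $\PP[\Psi(k)=1]=p$ by \eqref{eq:bernoulli}. The event $A_k$ is the intersection of the $h$ events $\{\Psi(k+i)=1\}$, $i=0,\dots,h-1$, which involve distinct time indices. By independence, its probability factorizes as
\begin{equation*}
	\PP(A_k)=\prod_{i=0}^{h-1}\PP[\Psi(k+i)=1]=p^h.
\end{equation*}
This value does not depend on $k$.

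Third, I will apply Boole's inequality (subadditivity of $\PP$) to the union:
\begin{equation*}
	\PP(\mathcal{E}_{h,\mathcal{T}})\le\sum_{k=0}^{\mathcal{T}-h+1}\PP(A_k)=(\mathcal{T}-h+2)\,p^h,
\end{equation*}
which is \eqref{eq:run_bound_k_ET_final}.

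No step here is a genuine obstacle, since the argument is a standard union bound. The only points needing care are the bookkeeping of the number of starting indices, $\mathcal{T}-h+2$ as fixed by the statement's index range, and making sure independence is invoked only across distinct time indices within each $A_k$. The events $A_k$ overlap for nearby $k$, so they are not disjoint, and this is precisely why the result is an inequality and not an equality. I will also remark that the bound is informative only when $(\mathcal{T}-h+2)p^h<1$, which is the criterion for choosing $h$.
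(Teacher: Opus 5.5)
Your proposal is correct and follows the paper's proof essentially step for step. You write $\mathcal{E}_{h,\mathcal{T}}$ as the union of the run events over the $\mathcal{T}-h+2$ starting indices, use the i.i.d.\ Bernoulli assumption to get $p^h$ for each event, and conclude with Boole's inequality.
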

\begin{proof}
For each possible starting time step $k\in\{0,\dots,\mathcal{T}-h+1\}$, let us define
the event
\begin{equation*}
	\mathcal{E}_k:=\big\{\Psi(k)=\Psi(k+1)=\cdots=\Psi(k+h-1)=1\big\}.
	\label{eq:Ek_def_ET_final}
\end{equation*}
The occurrence of a consecutive $h$-step dropout run within $\{0,\dots,\mathcal{T}\}$
is equivalent to the union of these events, \emph{i.e.,}
\begin{equation}
	\mathcal{E}_{h,\mathcal{T}}=\bigcup_{k=0}^{\mathcal{T}-h+1} \mathcal{E}_k.
	\label{eq:union_k_ET_final}
\end{equation}
Applying Boole's inequality (union bound) to \eqref{eq:union_k_ET_final} yields
\begin{equation}
	\PP(\mathcal{E}_{h,\mathcal{T}})\le \sum_{k=0}^{\mathcal{T}-h+1}\PP(\mathcal{E}_k).
	\label{eq:union_bound_k_ET_final}
\end{equation}
Since $\Psi(k)$ is \emph{i.i.d.} and \eqref{eq:bernoulli} holds, we obtain
\begin{align}\notag
	\PP(\mathcal{E}_k)
	&= \PP\Big(\bigcap_{i=0}^{h-1}\{\Psi(k+i)=1\}\Big)
	\\	\label{eq:Ek_prob_ET_final} &= \prod_{i=0}^{h-1}\PP\!\big(\Psi(k+i)=1\big)
	= p^h.
\end{align}
As there are $\mathcal{T}-h+2$ admissible starting indices in
$\{0,\dots,\mathcal{T}-h+1\}$, substituting \eqref{eq:Ek_prob_ET_final} into
\eqref{eq:union_bound_k_ET_final} yields
\begin{equation*}
	\PP(\mathcal{E}_{h,\mathcal{T}})\le (\mathcal{T}-h+2)\,p^h,
\end{equation*}
which completes the proof. 
\end{proof}
\begin{remark}
	While the bound in \eqref{eq:run_bound_k_ET_final} is potentially conservative due to the application of the union bound, it provides a practical criterion for selecting the control input prediction horizon $h$. Specifically, it ensures that the likelihood of observing an $h$-step outage within the prescribed finite horizon $\mathcal{T}$ remains strictly below a desired tolerance.$\hfill\square$
\end{remark}

\textbf{Design Guideline.}
Lemma~\ref{lem:run_bernoulli} provides a (potentially conservative) upper bound on the probability
of observing at least one $h$-step dropout burst over the finite safety horizon
$\mathcal{T}$. Accordingly, the control input prediction horizon $h$ can be chosen as the
smallest integer $h\in\{1,\dots,\mathcal{T}-1\}$, satisfying
\begin{equation}
	(\mathcal{T}-h+2)\,p^h \le \alpha,
	\label{eq:N_design}
\end{equation}
where $\alpha \in(0,1)$. This choice ensures that the likelihood of an $h$-consecutive-dropout event
$\mathcal{E}_{h,\mathcal{T}}$ remains below a prescribed allowable outage probability (see \cite{Popovski2019URLLC, Chang2019}).
We emphasize that $\alpha$ is a designer-specified \emph{allowable outage probability}, representing an upper bound on the probability of experiencing at least one $h$-step communication outage over the interval ${0,1,\dots,\mathcal{T}}$. Its value reflects the desired level of communication robustness and the criticality of the application, with safety-critical systems typically requiring smaller values of $\alpha$.

According to~\eqref{eq:N_design}, the selected horizon $h$ depends on the allowable outage probability
	$\alpha$, the downlink dropout likelihood $p$, which is a property of the
	communication channel, and the safety horizon $\mathcal{T}$. Under the proposed design rule, a less reliable channel, with a
	larger dropout probability \(p\) or a longer safety horizon \(\mathcal T\),
	may require a larger prediction horizon \(h\) in order to satisfy the
	prescribed outage tolerance \(\alpha\). Similarly, smaller values of
	\(\alpha\) generally lead to larger values of \(h\), thereby increasing the
	computational burden through a longer prediction horizon and its associated
	buffer dimension.
\section{Controller Synthesis with Safety Guarantees}\label{sec: CBC}
This section presents the main results of the work. Specifically, we define a CBC for dt-A-PSNCS, establish a finite-horizon probabilistic safety guarantee, and derive tractable conditions for synthesizing both a CBC and its corresponding PSC through an SOS optimization program. To do so, we first introduce the notion of CBC in the following definition.
\begin{definition}[\textbf{CBC}]\label{def: CBC}
	Consider a dt-A-PSNCS $\Delta=(\mathbb{X}, \mathbb{U}, \mathbb{Y}, A, G,C,D, \tau, p)$ with the augmented state space $\Theta =\mathbb{X}^{\tau+1} \times \mathbb{U}^h$. Suppose that $\Theta_a =\mathbb{X}_a^{\tau+1} \times \mathbb{U}^h \subset \Theta$ and $\Theta_b =\mathbb{X}_b \times (\mathbb{X}\backslash\mathbb{X}_b)^{\tau} \times \mathbb{U}^h \subset \Theta$ denote the augmented initial and unsafe sets, respectively, with $\Theta_a \cap \Theta_b=\emptyset$. Let the control input $\boldsymbol{u}(k)=\mathbf{F}_h(y(k{-}\tau))\,y(k{-}\tau)$ be given as in~\eqref{eq:ncs-kappa}. A function $\mathcal B: \Theta \to \mathbb{R}_0^+$ is called a CBC for dt-A-PSNCS if there exist constants $\eta, \gamma_a, \gamma_b \in \mathbb{R}^{+}$, with $\gamma_b > \gamma_a$, such that
	\begin{subequations}\label{eq: CBC}
		\begin{align}
			&  \:\:  \mathcal B(\theta) \leq \gamma_a, \hspace{3cm}  \forall \theta \in \Theta_{a},\label{subeq: initial}\\
			&  \:\:  \mathcal B(\theta) \geq \gamma_b, \hspace{3cm} \forall \theta \in \Theta_b, \label{subeq: unsafe}
		\end{align}  
		and $\forall \theta \in \Theta$ 
		\begin{align}\label{subeq: decreasing}
			& \EE \big[ \mathcal B(\theta(k+1)) \!\mid\! \theta(k) \big] - \mathcal B(\theta(k)) \leq  \eta,
		\end{align}
				\end{subequations}
	where $\EE$ denotes the conditional expectation with respect to the random packet-loss process  $\Psi(k)$ and the process noise  $w(k)$.$\hfill\square$
\end{definition}
Under Definition~\ref{def: CBC}, the following theorem establishes a lower bound on the probability that the system trajectories in~\eqref{eq:theta-affine} remain within the safe set over a finite horizon~\cite{Kushner}.

\begin{theorem}[\textbf{Probabilistic Safety Guarantee}]\label{Th: safety}
	Given a dt-A-PSNCS $\Delta=(\mathbb{X}, \mathbb{U}, \mathbb{Y}, A, G,C,D, \tau, p)$ in~\eqref{eq:theta-affine}, assume there exists a CBC $\mathcal{B}$ as in Definition~\ref{def: CBC}. Then, the probability that the solution process $\theta_s$ of dt-A-PSNCS, starting from any initial condition $\theta(0)  \in \Theta_a$ under an input signal $u(\cdot)$ never reaches $\Theta_b$ within a time horizon $\mathcal{T}$ is lower-bounded as
	\begin{equation}\label{bound}
		\PP \Big\{ {\theta}_s(k) \notin \Theta_b \text { for all } k \in \{1,\dots,\!\mathcal{T}\} \,\big|\, \theta(0) \Big\} \geq 1- \rho,
	\end{equation}
	with $\rho = \frac{\gamma_a + \eta \mathcal{T}}{\gamma_b}$.
\end{theorem}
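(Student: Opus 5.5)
The plan is to apply the standard supermartingale argument (Kushner's finite-horizon inequality for c-martingales) to the process $\mathcal{B}(\theta(k))$ generated by the closed-loop augmented dynamics~\eqref{eq:theta-affine}.

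First we fix the setup. Let $\mathcal{F}_k$ be the $\sigma$-algebra generated by $\theta(0)$, $\{w(j)\}_{j<k}$ and $\{\Psi(j)\}_{j<k}$, so that $\theta(k)$ is $\mathcal{F}_k$-measurable. Since $w(k)$ and $\Psi(k)$ are i.i.d. and independent of $\mathcal{F}_k$, the augmented process in~\eqref{eq:theta-affine} is Markov. The controller~\eqref{eq:ncs-kappa} depends only on $\bar C\theta(k)$, and $b(k)$ is a function of $\theta(k)$ and $\Psi(k)$. Hence condition~\eqref{subeq: decreasing} reads $\EE[\mathcal{B}(\theta(k+1))\mid\mathcal{F}_k]\le \mathcal{B}(\theta(k))+\eta$. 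Thus $\mathcal{B}(\theta(k))$ is a nonnegative c-martingale.

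Next we convert it into a nonnegative supermartingale on the horizon. Define $M_k := \mathcal{B}(\theta(k)) + \eta(\mathcal{T}-k)$ for $k\in\{0,\dots,\mathcal{T}\}$. Then $M_k\ge 0$, since $\mathcal{B}\ge0$, and
\[
\EE[M_{k+1}\mid\mathcal{F}_k]\le \mathcal{B}(\theta(k))+\eta+\eta(\mathcal{T}-k-1)=M_k .
\]
To handle the exit event, we stop the process at $\sigma:=\min\{k\in\{1,\dots,\mathcal{T}\}: \mathcal{B}(\theta(k))\ge\gamma_b\}$, with $\sigma=\mathcal{T}$ if no such $k$ exists. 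Doob's maximal inequality for nonnegative supermartingales (or optional stopping at the bounded time $\sigma$) then gives
\[
\PP\Big\{\sup_{0\le k\le\mathcal{T}} M_k\ge\gamma_b \,\Big|\, \theta(0)\Big\}\le \frac{M_0}{\gamma_b}=\frac{\mathcal{B}(\theta(0))+\eta\mathcal{T}}{\gamma_b}.
\]
Because $M_k\ge \mathcal{B}(\theta(k))$, the event $\{\sup_k \mathcal{B}(\theta(k))\ge\gamma_b\}$ is contained in $\{\sup_k M_k\ge\gamma_b\}$. Condition~\eqref{subeq: initial} gives $\mathcal{B}(\theta(0))\le\gamma_a$ for $\theta(0)\in\Theta_a$, so this probability is at most $\rho=(\gamma_a+\eta\mathcal{T})/\gamma_b$.

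Finally, condition~\eqref{subeq: unsafe} says that $\theta_s(k)\in\Theta_b$ implies $\mathcal{B}(\theta_s(k))\ge\gamma_b$. Hence the event of reaching $\Theta_b$ at some $k\in\{1,\dots,\mathcal{T}\}$ is contained in the event just bounded, and taking complements yields~\eqref{bound}.

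The main obstacle is justifying the measurability and Markov structure: the conditional expectation in~\eqref{subeq: decreasing} must coincide with conditioning on the full past $\mathcal{F}_k$. This requires that $\Psi(k)$ and $w(k)$ be independent of the history, and that the buffer and the delayed feedback be encoded in $\theta(k)$; both hold by construction of the augmented state~\eqref{eq:ncs-theta}. Everything else is the routine maximal-inequality step.
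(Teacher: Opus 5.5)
Your proposal is correct and takes the same route as the paper. The paper gives no written proof; it simply invokes Kushner's finite-horizon inequality for nonnegative c-martingales, and you have in effect proved that inequality (via the supermartingale $\mathcal{B}(\theta(k))+\eta(\mathcal{T}-k)$ and Doob's maximal inequality), together with the Markov and measurability justification that the paper leaves implicit.
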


\begin{remark}
	The CBC condition~\eqref{subeq: decreasing} is affected by both the uplink delay~$\tau$ and the probabilistic packet loss in the downlink $\Psi$, since these factors are explicitly incorporated into dt-A-PSNCS dynamics described in~\eqref{eq:theta-affine}. As a result, such communication imperfections directly impact the CBC $\mathcal{B}$, subsequently altering the level set values $\gamma_a$~\eqref{subeq: initial} and $\gamma_b$~\eqref{subeq: unsafe}, as well as the constant $\eta$. Together, these parameters determine the lower bound on the safety probability~$1-\rho$, as detailed in Theorem~\ref{Th: safety}.$\hfill\square$
\end{remark}

In this work, we utilize a quadratic CBC in the form of $\mathcal{B}(\theta) = \theta^\top P \theta$, where $P \succ 0$, $P \in \mathbb{R}^{\kappa \times \kappa}$. This approach enables the reformulation of condition~\eqref{subeq: decreasing} as a tractable matrix inequality, which is presented in the following theorem and constitutes the main result of this work. For notational convenience, we denote the current state by \(x := x(k)\) and the delayed state by \(x_{\tau} := x(k-\tau)\).

\begin{theorem}[\textbf{On Condition~\eqref{subeq: decreasing}}]
		\label{Th:inverse_decay}
		Consider the dt-A-PSNCS \(\Delta=(\mathbb X,\mathbb U,\mathbb Y,A,G,C,D,\tau,p)\) as 
		in~\eqref{eq:theta-affine}, with packet-dropout rate \(p\in(0,1)\) and
		time-invariant delay \(\tau\). Suppose that there exists a symmetric
		matrix \(Q\succ0\), $Q \in \mathbb{R}^{\kappa \times \kappa}$, a symmetric matrix \(M\succ0\), \(M\in\mathbb R^{q\times q}\), and
		a polynomial matrix \(Y(Cx_\tau)\in\mathbb R^{mh\times q}\) such that
		\begin{equation}\label{eq:recoverability}
			\bar C Q=M\bar C,
		\end{equation}
		with $\bar{C}$ as defined in \eqref{eq: Cbar}, and for all \((x,x_\tau)\in\mathbb X^2\),
		\begin{equation}\label{eq:inverse-LMI-compact}
			\begin{bmatrix}
				Q &&& \Omega_1^\top(x) &&& \Omega_2^\top(x,x_\tau)\\
				\star &&& Q &&& \mathbf 0_{\kappa\times\kappa}\\
				\star &&& \star &&& Q
			\end{bmatrix}
			\succeq0,
		\end{equation}
		where
		\begin{align}\notag
			\Omega_1(x)
			&= \sqrt p\,(H(x)+H_1(x))Q,\\\label{Omega}
			\Omega_2(x,x_\tau)
			&= \sqrt{1-p}\,(H(x)Q+\bar{G}(x)Y(Cx_\tau)\bar C).
		\end{align}
		Then, condition~\eqref{subeq: decreasing} holds for $\mathcal{B}(\theta) = \theta^\top P \theta$ with $P=Q^{-1},$ $\mathbf F_h(Cx_\tau)=Y(Cx_\tau)M^{-1},$ and $\eta=\mathsf{Tr}(\mathcal D^\top Q^{-1}\mathcal D).$
\end{theorem}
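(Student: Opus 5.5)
The plan is to compute the conditional expectation in \eqref{subeq: decreasing} for $\mathcal B(\theta)=\theta^\top P\theta$ explicitly, and reduce the resulting inequality to \eqref{eq:inverse-LMI-compact} through a congruence transformation and a Schur complement.

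\textbf{Step 1: closed-loop matrices and the expectation.} Write $\Gamma_1(x):=H(x)+H_1(x)$ for the dropout mode and $\Gamma_2(x,x_\tau):=H(x)+\bar G(x)\mathbf F_h(Cx_\tau)\bar C$ for the reception mode, using \eqref{eq:H2-decomposition}. Condition on $\theta(k)=\theta$. Three facts give the expectation:
\begin{itemize}
\item $\Psi(k)$ and $w(k)$ are independent of each other and of $\theta(k)$.
\item $w(k)$ has zero mean and identity covariance.
\item $\Psi(k)$ is Bernoulli with parameter $p$.
\end{itemize}
The cross terms between $\theta$ and $w$ vanish, and
\[
\EE[\mathcal B(\theta(k+1))\mid\theta]=\theta^\top\big(p\,\Gamma_1^\top P\Gamma_1+(1-p)\Gamma_2^\top P\Gamma_2\big)\theta+\mathsf{Tr}(\mathcal D^\top P\mathcal D).
\]
With $\eta=\mathsf{Tr}(\mathcal D^\top Q^{-1}\mathcal D)$, condition \eqref{subeq: decreasing} therefore follows from the matrix inequality
\[
p\,\Gamma_1^\top P\Gamma_1+(1-p)\Gamma_2^\top P\Gamma_2\preceq P \quad \text{for all } (x,x_\tau)\in\mathbb X^2.
\]

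\textbf{Step 2: change of variables.} Set $P=Q^{-1}$ and apply the congruence $Q(\cdot)Q$ to the inequality of Step 1. It becomes
\[
Q-\Omega_1^\top Q^{-1}\Omega_1-\tilde\Omega_2^\top Q^{-1}\tilde\Omega_2\succeq0,
\]
where $\Omega_1=\sqrt p\,\Gamma_1Q$ is exactly as in \eqref{Omega}, and $\tilde\Omega_2:=\sqrt{1-p}\,\Gamma_2Q=\sqrt{1-p}\,(H Q+\bar G\mathbf F_h\bar C Q)$. The bilinear product $\mathbf F_h\bar C Q$ is removed with \eqref{eq:recoverability}: $\bar C Q=M\bar C$, so
\[
\mathbf F_h\bar C Q=\mathbf F_h M\bar C=Y\bar C \quad\text{for } \mathbf F_h=YM^{-1}.
\]
Here $M\succ0$ is invertible, and $\mathbf F_h$ remains polynomial in $Cx_\tau$ because $M$ is constant. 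Hence $\tilde\Omega_2=\Omega_2$.

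\textbf{Step 3: Schur complement.} Since the $2\times 2$ lower-right block $\mathrm{diag}(Q,Q)$ of \eqref{eq:inverse-LMI-compact} is positive definite, the Schur complement lemma shows that \eqref{eq:inverse-LMI-compact} is equivalent to
\[
Q-\Omega_1^\top Q^{-1}\Omega_1-\Omega_2^\top Q^{-1}\Omega_2\succeq0.
\]
This is exactly the inequality obtained in Step 2.

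\textbf{Step 4: conclusion.} Multiply back by $Q^{-1}$ on both sides to recover the inequality of Step 1. Substituting into the expectation from Step 1 gives $\EE[\mathcal B(\theta(k+1))\mid\theta]-\mathcal B(\theta)\le\eta$ for every $\theta\in\Theta$.

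The only delicate point is Step 2. One must check that \eqref{eq:recoverability} is applied on the correct side, that is, to $\bar C Q$ and not $Q\bar C^\top$. One must also check that the expectation is taken conditionally on the full augmented state. This matters because $x_\tau$ is a component of $\theta$, so $\mathbf F_h(Cx_\tau)$ is deterministic given $\theta(k)$. The rest is routine.
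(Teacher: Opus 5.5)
Your proposal is correct and follows essentially the same route as the paper: you compute the conditional expectation, set $P=Q^{-1}$, apply the congruence by $Q$ and the Schur complement, and remove the bilinear term via $\mathbf F_h\bar C Q=\mathbf F_h M\bar C=Y\bar C$. The differences are cosmetic: you write the expectation directly as the two-mode mixture $p\,\Gamma_1^\top P\Gamma_1+(1-p)\,\Gamma_2^\top P\Gamma_2$, whereas the paper expands all cross terms and uses $\EE[\Psi(1-\Psi)]=0$ before regrouping, and you eliminate the bilinear term before the Schur step rather than after.
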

\begin{proof}
For brevity, let us define $\theta:=\theta(k)$, $\theta^+:=\theta(k{+}1)$, $\Psi:=\Psi(k)$, $w:=w(k)$,
$H:=H(x)$, $H_1:=H_1(x)$, and $H_2:=H_2(x,x_\tau)$ in \eqref{eq:ncs-theta}-\eqref{eq:H2}. Then
\begin{equation*}
	\theta^+ = (H+ \Psi H_1+(1-\Psi)H_2)\theta + \mathcal{D}w.
	\label{eq:theta-next}
\end{equation*}
By evaluating the conditional expectation of $\mathcal{B}$ at $k{+}1$, we obtain 
\begin{align*}
	&\EE\big[\mathcal{B}(\theta^+) \, \big|\, \theta\big]\\
	&= \EE\big[\big((H+ \Psi H_1+(1-\Psi)H_2)\theta + \mathcal{D} w\big)^\top P\\ & ~~~~~~~~\big((H+ \Psi H_1+(1-\Psi)H_2)\theta + \mathcal{D} w\big)\,\big|\,\theta\big] \\
	&= \EE\big[\theta^\top(H+ \Psi H_1+(1-\Psi)H_2)^\top P\\ &~~~~~~~~~ (H+ \Psi H_1+(1-\Psi)H_2)\theta \,\big|\,\theta\big]\\
	&~~~ + 2\EE\big[\theta^\top (H+ \Psi H_1+(1-\Psi)H_2)^\top P \mathcal{D} w \,\big|\,\theta\big]
	\\ &~~~+ \EE\big[w^\top \mathcal{D}^\top P \mathcal{D} w\big]\!.
\end{align*}
Since $\EE \big[w\mid \theta \big]=0$ and $w$ is independent of $(\theta,\Psi)$, the cross term vanishes as
\begin{equation*}
	\EE\big[\theta^\top (H+ \Psi H_1+(1-\Psi)H_2)^\top P\mathcal{D}w \,\big|\,\theta\big]=0.
\end{equation*}
Moreover, since $\EE \big[ww^\top \big]=\mathbf{I}_n$ and by the cyclic property of the trace operator, one has
\begin{align*}
	\EE\big[w^\top\mathcal{D}^\top P\mathcal{D}w\big]
	&=\mathsf{Tr}\big(\mathcal{D}^\top P\mathcal{D}\,\EE \big[ww^\top \big]\big)
	\\ &=\mathsf{Tr}\big(\mathcal{D}^\top P\mathcal{D}\big).
\end{align*}
In addition, 
\begin{align}\notag
	&\EE \big[\theta^\top(H+ \Psi H_1+(1-\Psi)H_2)^\top P\\\notag &~~~~~ (H+ \Psi H_1+(1-\Psi)H_2) \theta \,\big|\,\theta\big]\\\notag
	&= \theta^\top\EE\big[(H+ \Psi H_1+(1-\Psi)H_2)^\top P\\\label{exp1} &~~~~~ (H+ \Psi H_1+(1-\Psi)H_2) \,\big|\,\theta\big]\theta.
\end{align}	
Since $\Psi$ is Bernoulli distributed, we have $\EE\big[\Psi\big] = \EE\big[\Psi^2\big]=p,$ and the expression in \eqref{exp1} can be written as 
\begin{align*}
	&\theta^\top \!\! \Big( \EE \big[H^{\top} P H \big] \!+\! p \EE \big[H^{\top} P H_1 \big] \!+\! (1 \!-\! p) \EE[H^{\top} P H_2] \\
	& ~~~+\! p \EE \big[H_1^{\top} P H \big] \!+\! p \EE \big[H_1^{\top} P H_1 \big] \!+\! \underbrace{\EE \big[\Psi(1 \!-\! \Psi) \big] \EE \big[H_1^{\top} P H_2 \big]}_{=0} \\
	& ~~~+(1-p)\EE \big[ H_2^{\top} P H \big] +\underbrace{\EE \big[\Psi(1-\Psi) \big] \EE \big[H_2^{\top} P H_1 \big]}_{=0} \\ &~~~+(1-p) \EE \big[H_2^{\top} P H_2 \big]\Big)\theta \\
	&= \theta^\top \Big(
	H^\top P H
	+ p H^\top P H_1
	+ (1-p)H^\top P H_2
	+ p H_1^\top P H
	\\
	&\qquad
	+ p H_1^\top P H_1 + (1-p)H_2^\top P H
	+ (1-p)H_2^\top P H_2
	\Big)\theta.
\end{align*}
Then, the condition in \eqref{subeq: decreasing} can be written as
\begin{align*}
	&\EE \big[\mathcal{B}\big(\theta(k{+}1)\big)\,\big|\, \theta(k)\big] - \mathcal{B}\big(\theta(k)\big)\\
	&= \theta^\top \big( H^{\top} P H \!+\! pH^{\top} P H_1 \!+\! (1-p)H^{\top} P H_2 \!+\!  pH_1^{\top} P H\\ &~~~ + pH_1^{\top} P H_1 \!+\! (1-p)H_2^{\top} P H \!+\! (1-p)H_2^{\top} P H_2 \!-\! P \big)\theta \\ &~~~+ \underbrace{\mathsf{Tr}\big( \mathcal{D}^\top P \mathcal{D}\big)}_{\eta}.
\end{align*}
Then, to satisfy \eqref{subeq: decreasing}, it is sufficient to show that
\begin{align}\notag
	&H^{\top} P H \!+\! pH^{\top} P H_1 \!+\! (1-p)H^{\top} P H_2 \!+\! pH_1^{\top} P H \\\label{exp2} &~+ pH_1^{\top} P H_1 \!+\! (1-p)H_2^{\top} P H \!+\! (1-p)H_2^{\top} P H_2 \!-\! P \preceq 0,
\end{align}
with $\eta = \mathsf{Tr}\big( \mathcal{D}^\top P  \mathcal{D} \big)$.  	The left-hand side of~\eqref{exp2} can be grouped as
\begin{align}\notag
	&p(H+H_1)^{\top} P(H+H_1)\\\label{exp3} &~~~+(1-p)(H+H_2)^{\top} P\left(H+H_2\right)-P \preceq 0 .
\end{align}
To avoid the bilinear coupling between the decision matrix \(P\) and the
	controller-dependent matrix \(H_2\), we employ \(P=Q^{-1}\),
	\(\mathcal J_1=\sqrt p\,(H+H_1)\), and
	\(\mathcal J_2=\sqrt{1-p}(H+H_2)\). Then, condition~\eqref{exp3}
	can be reformulated as
	\begin{equation}\label{eq:J-decrease-P}
		\mathcal J_1^\top P\mathcal J_1+
		\mathcal J_2^\top P\mathcal J_2-P\preceq0 .
	\end{equation}
	Multiplying \eqref{eq:J-decrease-P} from the left and right by \(Q=P^{-1}\) results in
	\begin{equation}\label{eq:J-decrease-Q}
		(\mathcal J_1Q)^\top Q^{-1}(\mathcal J_1Q)
		+
		(\mathcal J_2Q)^\top Q^{-1}(\mathcal J_2Q)
		-Q\preceq0 .
	\end{equation}
	Using the Schur complement,~\eqref{eq:J-decrease-Q} is equivalent to
	\begin{equation*}
		\begin{bmatrix}
			Q &&& (\mathcal J_1Q)^\top &&& (\mathcal J_2Q)^\top\\
			\star &&& Q &&& \mathbf 0_{\kappa\times\kappa}\\
			\star &&& \star &&& Q
		\end{bmatrix}
		\succeq0.
	\end{equation*}
	Using \(H_2=\bar{G}(x)\mathbf F_h(Cx_\tau)\bar C\) in \eqref{eq:H2-decomposition}, one has
	\begin{equation}\label{eq:J2Q-before-recovery}
		\mathcal J_2Q
		=
		\sqrt{1-p}
		\left(HQ+\bar{G}(x)\mathbf F_h(Cx_\tau)\bar C Q\right).
	\end{equation}
	By the constraint \(\bar C Q=M\bar C\) in \eqref{eq:recoverability}, it follows that
	\begin{equation*}
		\mathbf F_h(Cx_\tau)\bar C Q
		=
		\mathbf F_h(Cx_\tau)M\bar C .
	\end{equation*}
	Defining \(Y(Cx_\tau)=\mathbf F_h(Cx_\tau)M\),~\eqref{eq:J2Q-before-recovery}
	becomes $\mathcal J_2Q=\sqrt{1-p}(HQ+\bar{G}(x)Y(Cx_\tau)\bar C).$ Therefore, condition~\eqref{eq:inverse-LMI-compact} implies~\eqref{eq:J-decrease-P}.
	In addition,
	\begin{equation*}
		\eta
		=
		\mathsf{Tr}(\mathcal D^\top P\mathcal D)
		=
		\mathsf{Tr}(\mathcal D^\top Q^{-1}\mathcal D).
	\end{equation*}
	Finally, from \(Y(Cx_\tau)=\mathbf F_h(Cx_\tau)M\), the PSC polynomial gain  is $\mathbf F_h(Cx_\tau)=Y(Cx_\tau)M^{-1},$ which completes the proof.
\end{proof}
	\subsection{Computation of CBC and PSC}
   Here, we present the following lemma to facilitate the design of a CBC and its associated PSC for dt-A-PSNCS through an SOS program.
\begin{lemma}[\textbf{SOS Program}]\label{lem:inverse_sos} 
	Let the state set $\mathbb{X}$ be defined by a vector of polynomial inequalities as $ \mathbb{X}=\left\{ x  \in \mathbb{R}^n  \mid J(x) \geq 0\right\}$. Suppose there exist matrices $Q \succ 0$, $M \succ 0$ with $	\bar C Q=M\bar C$, a state-dependent polynomial matrix $Y(Cx_\tau)$, and vectors of SOS polynomials $\mathcal{Y}(x,x_\tau)$ and $\mathcal{Y}_{\tau}(x,x_\tau)$, such that
	\begin{align}\notag
	\Gamma(x,x_\tau)
	&=
	\begin{bmatrix}
		Q &&& \Omega_1^\top(x) &&& \Omega_2^\top(x,x_\tau)\\
		\star &&& Q &&& \mathbf 0_{\kappa\times\kappa}\\
		\star &&& \star &&& Q
	\end{bmatrix}\\ \label{eq:SOS-inverse-condition} &~~~
	-
	(
	\mathcal Y(x,x_\tau)^\top J(x)
	+
	\mathcal Y_\tau(x,x_\tau)^\top J(x_\tau)
	) \mathbf{I}_{3\kappa},
\end{align}
with $\Omega_1(x)$ and $\Omega_2(x,x_\tau)$ as in \eqref{Omega}. Let \(\gamma_a\) and \(\gamma_b\) be defined as
	\begin{subequations}
\begin{align}\label{eq:gamma-a-inverse}
	\gamma_a
	&=
	\lambda_{\max}(Q^{-1})
	\max_{\theta\in\Theta_a}\|\theta\|^2 ,\\\label{eq:gamma-b-inverse}
	\gamma_b
	&=
	\lambda_{\min}(Q^{-1})
	\min_{\theta\in\Theta_b}\|\theta\|^2 .
\end{align}
\end{subequations}
If \(\gamma_b>\gamma_a\), then \(\mathcal B(\theta)=\theta^\top Q^{-1}\theta\) is a CBC and
\(\boldsymbol{u}(k)\) in \eqref{eq:ncs-kappa} is a PSC with
\begin{equation}\label{eq:eta-inverse}
	\eta=\mathsf{Tr}(\mathcal D^\top Q^{-1}\mathcal D).
\end{equation}
\end{lemma}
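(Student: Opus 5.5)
The plan is to verify the three conditions of Definition~\ref{def: CBC} for $\mathcal B(\theta)=\theta^\top Q^{-1}\theta$, using Theorem~\ref{Th:inverse_decay} for the expectation condition and Rayleigh-quotient bounds for the level-set conditions. Implicitly, the lemma's hypothesis is that $\Gamma(x,x_\tau)$ in \eqref{eq:SOS-inverse-condition} is an SOS matrix polynomial, with the multiplier vectors $\mathcal Y,\mathcal Y_\tau$ entrywise SOS. I will make this reading explicit at the start.

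\textbf{Decrease condition~\eqref{subeq: decreasing}.} Take any $(x,x_\tau)\in\mathbb X^2$. Then $J(x)\ge0$ and $J(x_\tau)\ge0$ componentwise. Since the entries of $\mathcal Y$ and $\mathcal Y_\tau$ are SOS, hence nonnegative, the scalar $\sigma(x,x_\tau):=\mathcal Y^\top J(x)+\mathcal Y_\tau^\top J(x_\tau)$ is nonnegative there. Because $\Gamma$ is SOS, $\Gamma(x,x_\tau)\succeq0$ for every real argument. It follows that the block matrix in \eqref{eq:inverse-LMI-compact} equals $\Gamma+\sigma\mathbf I_{3\kappa}$, which is positive semidefinite on $\mathbb X^2$. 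This is the standard S-procedure argument. Together with $Q\succ0$, $M\succ0$ and $\bar CQ=M\bar C$, all hypotheses of Theorem~\ref{Th:inverse_decay} hold. That theorem then gives \eqref{subeq: decreasing} with $P=Q^{-1}$, $\mathbf F_h=YM^{-1}$, and $\eta=\mathsf{Tr}(\mathcal D^\top Q^{-1}\mathcal D)$, which is exactly \eqref{eq:eta-inverse}. Since $\theta$ only enters $H,H_1,H_2$ through $x$ and $x_\tau$, the condition holds for all $\theta\in\Theta$.

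\textbf{Level-set conditions.} Because $Q^{-1}\succ0$ is symmetric, we have $\lambda_{\min}(Q^{-1})\|\theta\|^2\le\theta^\top Q^{-1}\theta\le\lambda_{\max}(Q^{-1})\|\theta\|^2$. For $\theta\in\Theta_a$, the upper bound gives $\mathcal B(\theta)\le\lambda_{\max}(Q^{-1})\max_{\Theta_a}\|\theta\|^2=\gamma_a$, which is \eqref{subeq: initial}. For $\theta\in\Theta_b$, the lower bound gives $\mathcal B(\theta)\ge\gamma_b$, which is \eqref{subeq: unsafe}. Nonnegativity of $\mathcal B$ is immediate from $Q^{-1}\succ0$. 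The assumption $\gamma_b>\gamma_a$ supplies the remaining requirement of Definition~\ref{def: CBC}, and positivity of $\eta$ holds whenever $\mathcal D\neq0$.

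\textbf{Expected obstacle.} The main care point is well-posedness of the max and min. They should be read as sup and inf, assuming $\Theta_a$ is bounded, which needs $\mathbb X_a$ and $\mathbb U$ bounded. With sup/inf the inequalities survive unchanged. A second subtlety is that the SOS certificate only covers the pair $(x,x_\tau)$, while Theorem~\ref{Th:inverse_decay} needs \eqref{eq:inverse-LMI-compact} only on $\mathbb X^2$, so nothing more is required. Finally, $\boldsymbol u(k)=\mathbf F_h(y(k-\tau))y(k-\tau)$ is a PSC, because it is precisely the controller under which the CBC conditions were certified. Theorem~\ref{Th: safety} then yields the probabilistic guarantee.
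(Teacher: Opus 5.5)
Your proposal is correct and follows essentially the same route as the paper. The S-procedure step (SOS multipliers are nonnegative on $\mathbb X^2$, and $\Gamma$ is SOS hence positive semidefinite) recovers \eqref{eq:inverse-LMI-compact}, so Theorem~\ref{Th:inverse_decay} yields \eqref{subeq: decreasing} with $\eta=\mathsf{Tr}(\mathcal D^\top Q^{-1}\mathcal D)$. The Rayleigh-quotient bounds $\lambda_{\min}(Q^{-1})\|\theta\|^2\le\mathcal B(\theta)\le\lambda_{\max}(Q^{-1})\|\theta\|^2$ then give \eqref{subeq: initial}--\eqref{subeq: unsafe}, exactly as in the paper. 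Your extra remarks (reading max/min as sup/inf, and noting that $\eta>0$ needs $\mathcal D\neq0$) are sensible refinements that the paper leaves implicit.
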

\begin{proof}
We first show that \eqref{eq:SOS-inverse-condition} implies \eqref{eq:inverse-LMI-compact}. As $\mathcal{Y}(x,x_\tau)$ and $\mathcal{Y}_\tau(x,x_{\tau})$ are both SOS polynomials, {we have $\mathcal{Y}^\top(x,x_\tau)J(x) +\mathcal{Y}_\tau^\top(x,x_\tau)J(x_\tau)\ge 0$} for all $(x,x_\tau) \in \mathbb{X}^2$. Given that \eqref{eq:SOS-inverse-condition} is an SOS matrix polynomial, it follows that $\Gamma(x,x_\tau)\succeq 0$. Consequently, \eqref{eq:inverse-LMI-compact} holds, which in turn implies \eqref{subeq: decreasing}.

\begin{algorithm}[t]
	\caption{CBC and PSC synthesis under communication imperfections}
	\label{Alg2}
	\begin{algorithmic}[1]
		\REQUIRE dt-A-PSNCS, \(\Upsilon=(\mathbb X_a,\mathbb X_b,\mathcal T)\), packet-dropout probability $p$, allowable outage probability \(\alpha\) in~\eqref{eq:N_design}
		\STATE Set the parallel-in/serial-out left-shift operator \(S\) in~\eqref{eq:S-matrix}
		\STATE Compute the control input prediction horizon \(h\) according to~\eqref{eq:N_design}
		\STATE Construct \(H(x)\), \(H_1(x)\), and \(H_2(x,x_\tau)\) in~\eqref{eq:H}, \eqref{eq:H1}, and~\eqref{eq:H2}
		\STATE Construct the decomposition \(H_2(x,x_\tau)=\bar{G}(x)\mathbf F_h(Cx_\tau)\bar C\)
		\STATE Solve the SOS condition in~\eqref{eq:SOS-inverse-condition} using \textsf{SOSTOOLS}~\cite{SOSTOOLS} with \textsf{Mosek}~\cite{mosek} to compute \(Q\succ0\), \(M\succ0\) and the polynomial matrix \(Y(Cx_\tau)\), subject to \(\bar C Q=M\bar C\)
		\STATE Recover \(P=Q^{-1}\) and \(\mathbf F_h(Cx_\tau)=Y(Cx_\tau)M^{-1}\)
		\STATE Compute the applied control input according to \(u(k)=\Phi b(k)\) in~\eqref{eq:applied}
		\STATE Design level sets \(\gamma_a,\gamma_b\) in~\eqref{eq:gamma-a-inverse} and~\eqref{eq:gamma-b-inverse} using the constructed \(\mathcal B(\theta)\)
		\STATE Compute \(\eta=\mathsf{Tr}(\mathcal D^\top Q^{-1}\mathcal D)\) in~\eqref{eq:eta-inverse}
		\STATE Compute the safety probability lower bound \(1-\rho=1-\frac{\gamma_a+\eta\mathcal T}{\gamma_b}\)
		\ENSURE CBC \(\mathcal B(\theta)=\theta^\top Q^{-1}\theta\), PSC \(\boldsymbol{u}(k)\) in \eqref{eq:ncs-kappa}, and probabilistic safety guarantee in~\eqref{bound}
	\end{algorithmic}
\end{algorithm}

\begin{table*}[t]
	\makeatletter
	\long\def\@makecaption#1#2{%
		\vskip\abovecaptionskip
		\noindent \textbf{#1.} #2\par
		\vskip\belowcaptionskip}
	\makeatother
	\centering
	\caption{Summary of the results obtained for the case studies, where $h$ denotes the control input prediction horizon, $\alpha$ the allowable outage probability, $\tau$ the time-invariant uplink delay, $\mathsf{deg}$ the polynomial degree of the system dynamics, $n$ the state dimension, $p$ the downlink packet-dropout rate, $\gamma_a$ and $\gamma_b$ the level-set parameters associated with the initial and unsafe sets, $1-\rho$ the lower bound on the safety probability, $\eta$ the parameter defined in \eqref{eq:eta-inverse}, and $\mathcal{T}$ the finite safety horizon.}
	\label{tab:system-configurations}
	\vspace{2mm}
	\renewcommand{\arraystretch}{1.15}
	\begin{tabular}{@{}l|ccccccccccc@{}}
		\toprule
		Case study& $h$ & $\alpha$ & $\tau$ & $\mathsf{deg}$ & $n$ & $p$ & $\gamma_a$ & $\gamma_b$ & $1-\rho$ & $\eta$ & $\mathcal{T}$ \\
		\midrule
		\textbf{Van der Pol oscillator}
		& $2$ &   $ 0.090$ & $2$ &  $3$& $3$ & $0.015$ & $13.34$ &$427.35$ & $93\%$ & $0.07$ &200 \\\midrule
		
		\textbf{Jet engine compressor}
		& 3 & $0.054$ & $2$ & $3$ & $2$ & $0.04$ &  $4.06 \times 10^{5}$  & $1.44 \times 10^{7}$  & $90\%$ & $9.39 \times 10^{3}$ & $100$ \\\midrule
		
		\textbf{Academic system}
		& 3 & $ 0.0127$ & $2$ & $2$ & $2$ & $0.03$ & $6.06 \times 10^{6}$ & $ 9.34 \times 10^{7}$ & $93\%$ &  $ 2.05 \times 10^{3}$&$200$ \\
		\bottomrule
	\end{tabular}
\end{table*}

\noindent We now show conditions~\eqref{subeq: initial} and~\eqref{subeq: unsafe} are satisfied, as well.
Since \(P=Q^{-1}\), one has
\(\mathcal B(\theta)=\theta^\top Q^{-1}\theta\). For any
\(\theta\in\Theta_a\), 
\begin{equation*}
	\mathcal B(\theta) \!=\!
	\theta^\top Q^{-1}\theta
	\!	\leq \!
	\lambda_{\max}(Q^{-1})\|\theta\|^2
	\! \leq \!
	\lambda_{\max}(Q^{-1})
	\max_{\theta\in\Theta_a}\|\theta\|^2 .
\end{equation*}
Hence, condition~\eqref{subeq: initial} holds with
\begin{equation*}
	\gamma_a =
	\lambda_{\max}(Q^{-1})
	\max_{\theta\in\Theta_a}\|\theta\|^2 .
\end{equation*}
Similarly, for any \(\theta\in\Theta_b\),
\begin{equation*}
	\mathcal B(\theta) \!=\!
	\theta^\top Q^{-1}\theta
	\! \geq \!
	\lambda_{\min}(Q^{-1})\|\theta\|^2
	\! \geq \!
	\lambda_{\min}(Q^{-1})
	\min_{\theta\in\Theta_b}\|\theta\|^2 .
\end{equation*}
Therefore, condition~\eqref{subeq: unsafe} is met with
\begin{equation*}
	\gamma_b
	=
	\lambda_{\min}(Q^{-1})
	\min_{\theta\in\Theta_b}\|\theta\|^2 .
\end{equation*}
Since \(\gamma_b>\gamma_a\), then \(\mathcal B(\theta)=\theta^\top Q^{-1}\theta\) is a CBC and
\(\boldsymbol{u}(k)\) is its PSC with the polynomial gain \(\mathbf{F}_h(Cx_\tau) = Y(Cx_\tau)M^{-1}\), and
with the constant $\eta=\mathsf{Tr}(\mathcal D^\top Q^{-1}\mathcal D),$ which completes the proof.
\end{proof}

Algorithm~\ref{Alg2} summarizes the overall procedure for synthesizing the CBC and its associated PSC for stochastic networked control systems subject to uplink delays and downlink packet losses, and for computing the resulting finite-horizon probabilistic safety guarantee.

\section{Case Studies}\label{sec: Case}
We demonstrate the effectiveness of the proposed framework through three case studies: an academic system, a jet engine compressor~\cite{anta2010sample}, and a Van der Pol oscillator~\cite{Han_3D_VAN}. The primary objective is to synthesize a CBC and its associated PSC for a dt-A-PSNCS in \eqref{eq:theta-affine}, considering the effects of uplink delays and downlink packet dropouts. This is achieved by following the procedure detailed in Algorithm~\ref{Alg2}. The key quantitative results obtained for all case studies are summarized in Table~\ref{tab:system-configurations}. Where applicable, for notational simplicity, we denote the elements of the packetized controller by $\nu:=\nu_k(k)$, $\nu_1:=\nu_k(k+1)$, $\nu_2:=\nu_k(k+2)$, and $\nu_3:=\nu_k(k+3)$, and the delayed state components by $x_{\tau1}:=x_1(k-\tau)$, $x_{\tau2}:=x_2(k-\tau)$ and $x_{\tau3}:=x_3(k-\tau)$. All simulations were conducted on a \textsf{MacBook} with an \textsf{M2} chip and 32~\textsf{GB} of memory, using \textsf{MATLAB R2023b}.

\subsection{Case Study 1: Academic System}\label{Case_study_1}
We consider a polynomial stochastic plant with the dynamics described as
\begin{align}\notag
	{x}_1(k \!+\!1)&\!=\! x_1(k) \!+\! 0.01x_1(k)x_2(k) \! +\! {u}(k) \! +\! 0.1w_1(k),\\\notag
	{x}_2(k \!+\! 1)& \!=\! x_2(k) \!+\! 0.02x_1(k)x_2(k) \!+\! x_1(k){u}(k) \!+\! 0.1w_2(k),\\\label{academic}
	y(k) &= [x_1(k);x_2(k)].
\end{align}
The system in \eqref{academic} can be expressed in the form of the plant in Definition~\ref{Plant_model} along with its relevant matrices as $	C = \mathbf{I}_2,$ $D = 0.1\times \mathbf{I}_2$ and 
\begin{align*}
	{A}(x) \!=\! \begin{bmatrix}
		1 & 0.01x_1\\
		0.02x_2 & 1
	\end{bmatrix}\!\!,\quad
	G(x) \!=\!	\begin{bmatrix}
		1 \\
		x_1 
	\end{bmatrix}\!\!.
\end{align*}
The time-invariant uplink delay and the downlink packet dropout rate are given as $\tau=2$ and $p=0.03$, respectively. 

The regions of interest are given by $\mathbb X= [-4,4]^2$, $\mathbb X_{a} = [-1,1]^2$, and $\mathbb X_{b} = [3, 4]^2 \cup [-4,-3]^2$. Following the steps outlined in Algorithm~\ref{Alg2}, given an allowable outage probability of $\alpha = 0.0127$, and the fixed time
horizon $\mathcal{T}=200$, Lemma~\ref{lem:run_bernoulli} and the relation in \eqref{eq:N_design} yield a control input prediction horizon of $h = 3$. Then, through the satisfaction of condition \eqref{eq:SOS-inverse-condition}, we compute the CBC matrix\footnote{Due to the large dimension of $P \in \mathbb{R}^{9 \times 9}$, we omit its full description for brevity.} $P$ and its corresponding PSC as
\begin{equation}\label{cont_A}
	\boldsymbol{u} \!=\! \begin{bmatrix}
		-0.015x_{\tau 2}^2-0.018x_{\tau 1}\\
		-0.001x_{\tau 1}\\
		-0.00018x_{\tau 1}x_{\tau 2} -0.0008x_{\tau 1}
	\end{bmatrix}
	\!\!.
\end{equation}
Given the synthesized matrix \(P\) and the conditions in~\eqref{eq:gamma-a-inverse} and~\eqref{eq:gamma-b-inverse}, we design the parameters
\(\gamma_a =  6.06 \times 10^{6}\), \(\gamma_b = 9.34 \times 10^{7}\) and $\eta=2.05 \times 10^{3}$ in \eqref{eq:eta-inverse}.
Therefore, by Theorem~\ref{Th: safety}, the system in \eqref{academic} is guaranteed to be safe with a probability of at least $93\%$ over a finite time horizon $\mathcal{T}=200$, even in the presence of uplink time-invariant delay and downlink packet dropouts. Simulation results for the system in \eqref{academic} are shown in Fig.~\ref{fig:AAtraj}. 

{We now re-examine the academic system under a more challenging communication
setting, with uplink delay $\tau=3$ and dropout rate $p=0.07$, keeping the
horizon $\mathcal{T}=200$. For $\alpha=0.0127$,~\eqref{eq:N_design} now gives $h=4$, raising the augmented dimension to
$\kappa=n(\tau+1)+mh=12$. Satisfying condition~\eqref{eq:SOS-inverse-condition} yields the CBC
matrix $P\in\mathbb{R}^{12\times 12}$ and the PSC $\boldsymbol{u}=[\nu;\nu_1;\nu_2;\nu_3]$,
where
\begin{align*}
	\nu   &= -0.02\,x_{\tau1} - 0.01\,x_{\tau2}^2, &
	\nu_1 &= -0.0012\,x_{\tau1}, \\
	\nu_2 &= -0.0009\,x_{\tau1}, &
	\nu_3 &= -0.0006\,x_{\tau1}.
\end{align*}
With $\gamma_a=6.66\times10^{6}$, $\gamma_b=1.02\times10^{8}$, and
$\eta=5.86\times10^{3}$, Theorem~\ref{Th: safety} guarantees safety with
probability at least $92\%$ over $\mathcal{T}=200$, despite the increased
delay and packet loss. A higher dropout rate results in larger values of $h$,
while the increased delay and longer input prediction horizon together enlarge the augmented
dimension $\kappa$; both effects increase the computational burden and the
size of the SOS program.}

\begin{figure*}[t!]
	\centering
	\begin{subfigure}[t]{0.23\textwidth}
		\centering
		\includegraphics[width=\linewidth]{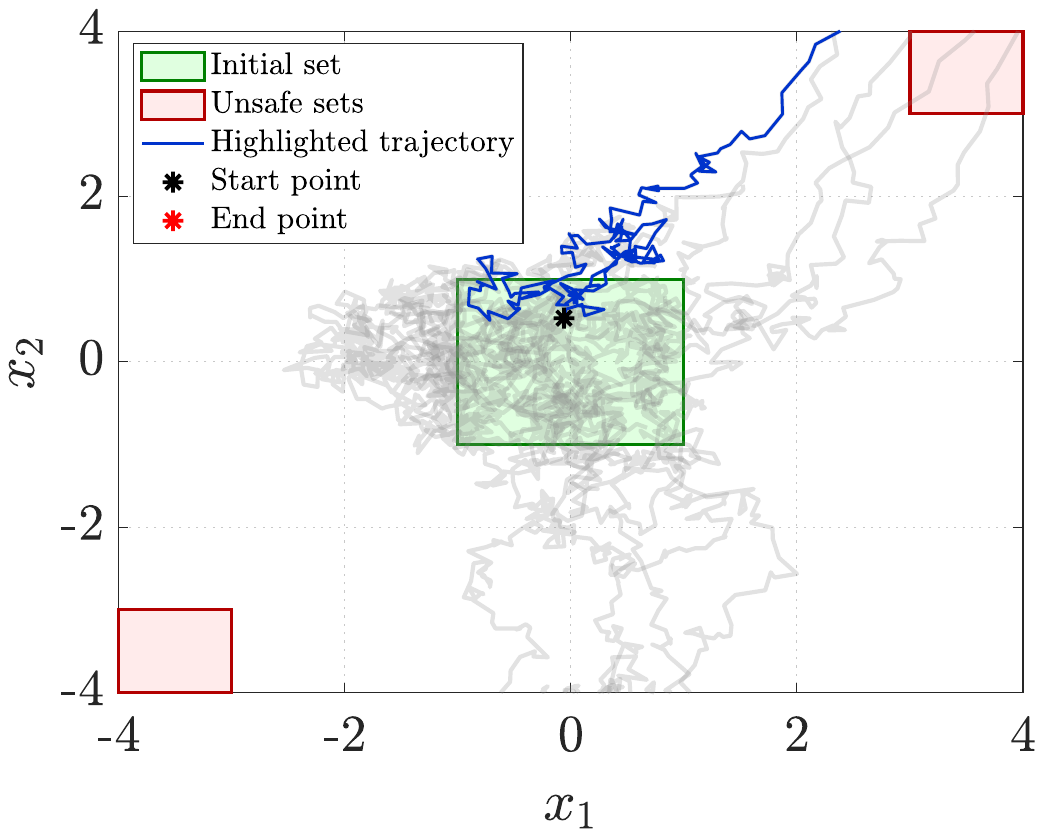}
		\caption{}
		\label{fig:A4}
	\end{subfigure}
	\hfill
	\begin{subfigure}[t]{0.23\textwidth}
		\centering
		\includegraphics[width=\linewidth]{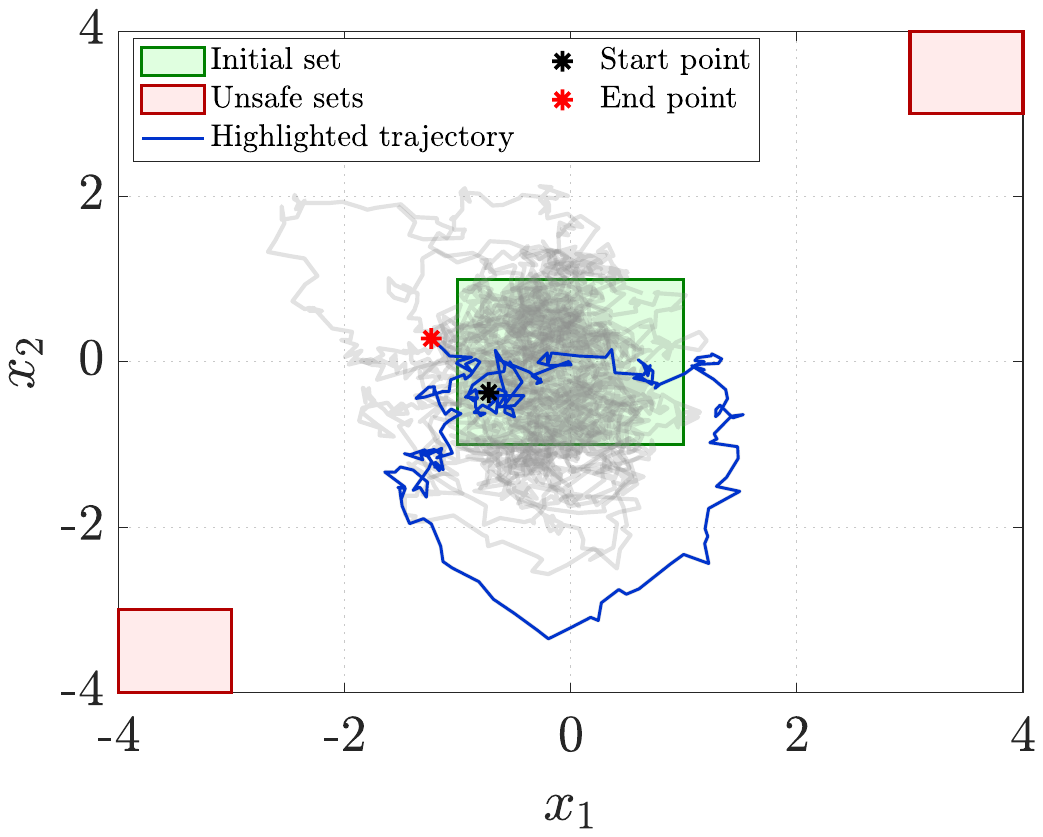}
		\caption{}
		\label{fig:A3}
	\end{subfigure}
	\hfill
	\begin{subfigure}[t]{0.25\textwidth}
		\centering
		\includegraphics[width=\linewidth]{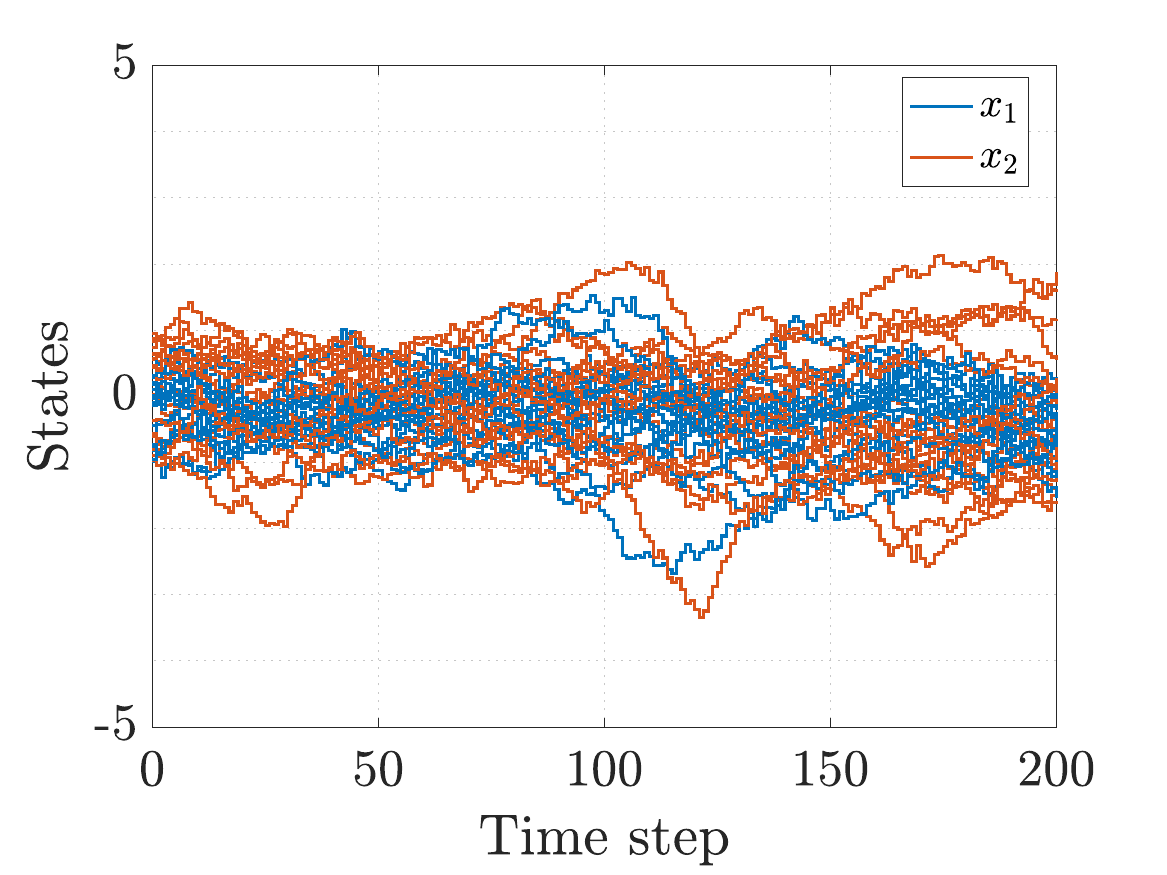}
		\caption{}
		\label{fig:A2}
	\end{subfigure}
	\hfill
	\begin{subfigure}[t]{0.25\textwidth}
		\centering
		\includegraphics[width=\linewidth]{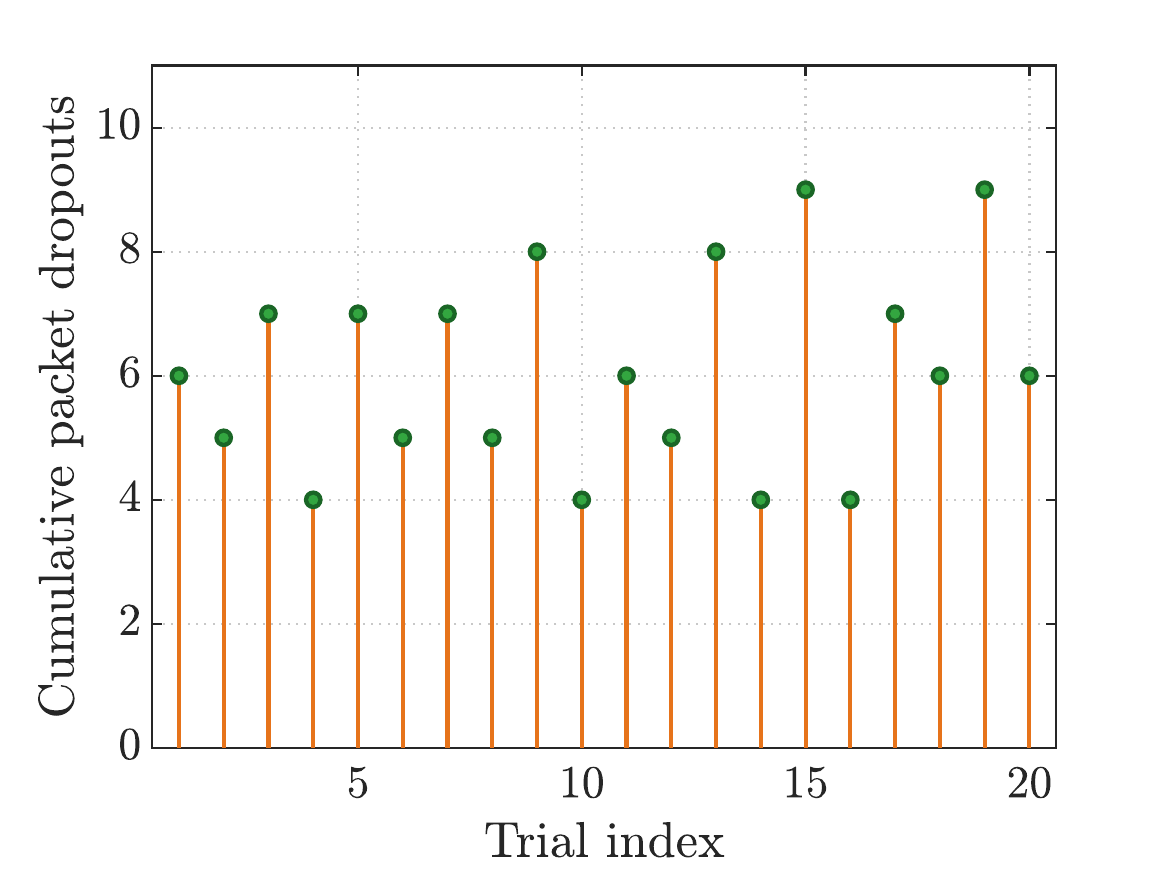}
		\caption{}
		\label{fig:A1}
	\end{subfigure}
	\caption{\textbf{Academic system}. Plot (a) shows the open-loop state trajectories, whereas plot (b) presents the closed-loop trajectories generated by the synthesized PSC in~\eqref{cont_A}, for several initial conditions selected from $\mathbb{X}_a \subset [-1,1]^2$. Both plots are generated using $20$ different arbitrary noise realizations. Plot (c) depicts the state evolution over the safety horizon $\mathcal{T}=200$, confirming satisfaction of the safety specification~$\Upsilon$. Plot (d) illustrates the cumulative number of packet dropouts over the horizon $\mathcal{T}=200$ for each of the $20$ trials corresponding to different noise realizations.}
	\label{fig:AAtraj}
\end{figure*}
\begin{figure*}[t!]
	\centering
	\begin{subfigure}[t]{0.23\textwidth}
		\centering
		\includegraphics[width=\linewidth]{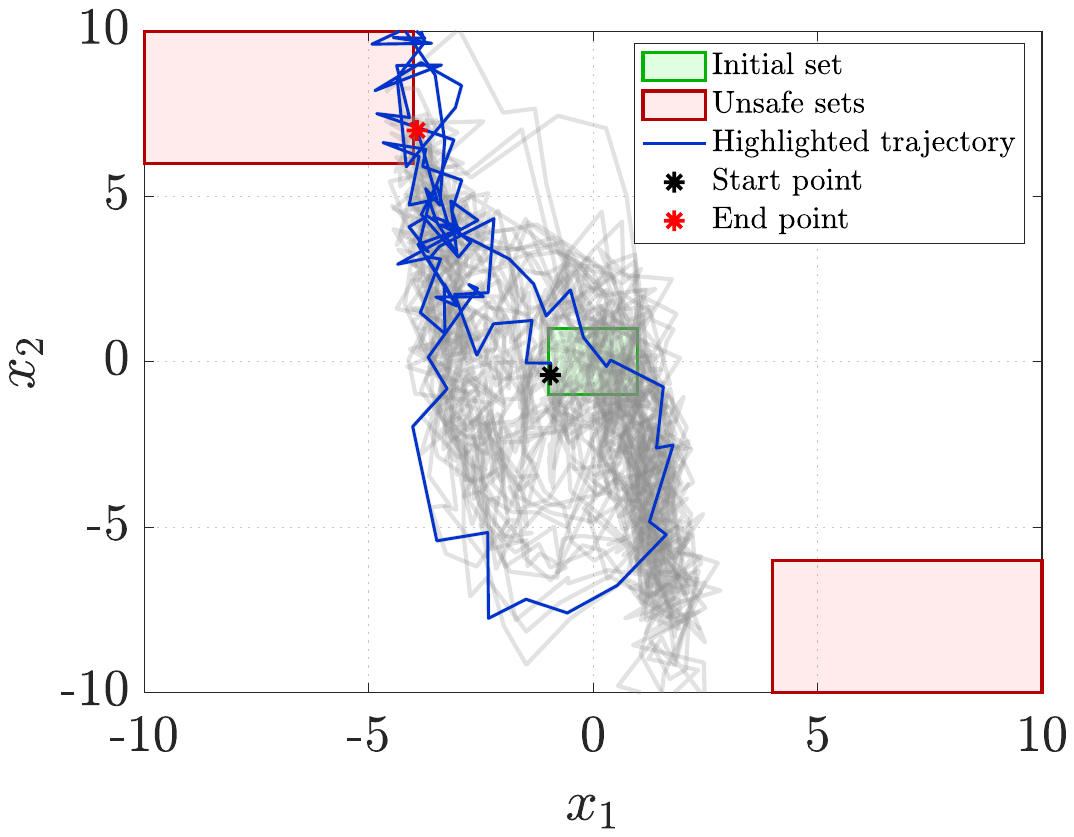}
		\caption{}
		\label{fig:J4}
	\end{subfigure}
	\hfill
	\begin{subfigure}[t]{0.23\textwidth}
		\centering
		\includegraphics[width=\linewidth]{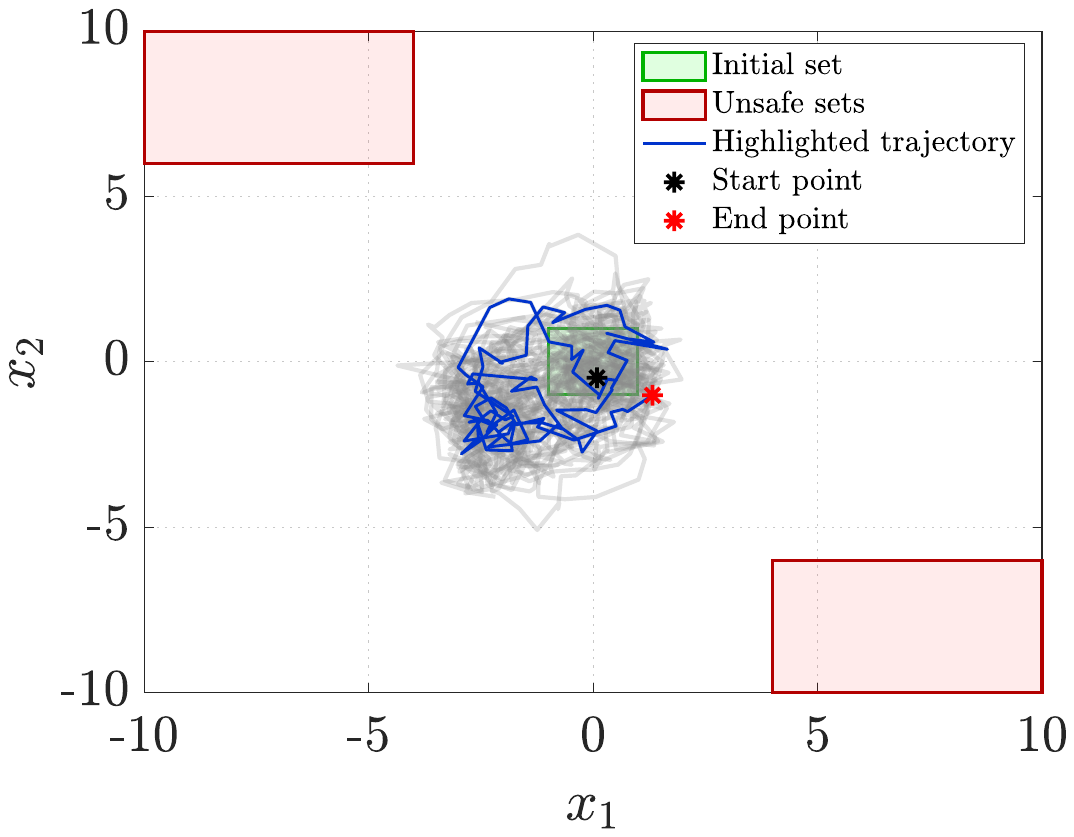}
		\caption{}
		\label{fig:J3}
	\end{subfigure}
	\hfill
	\begin{subfigure}[t]{0.25\textwidth}
		\centering
		\includegraphics[width=\linewidth]{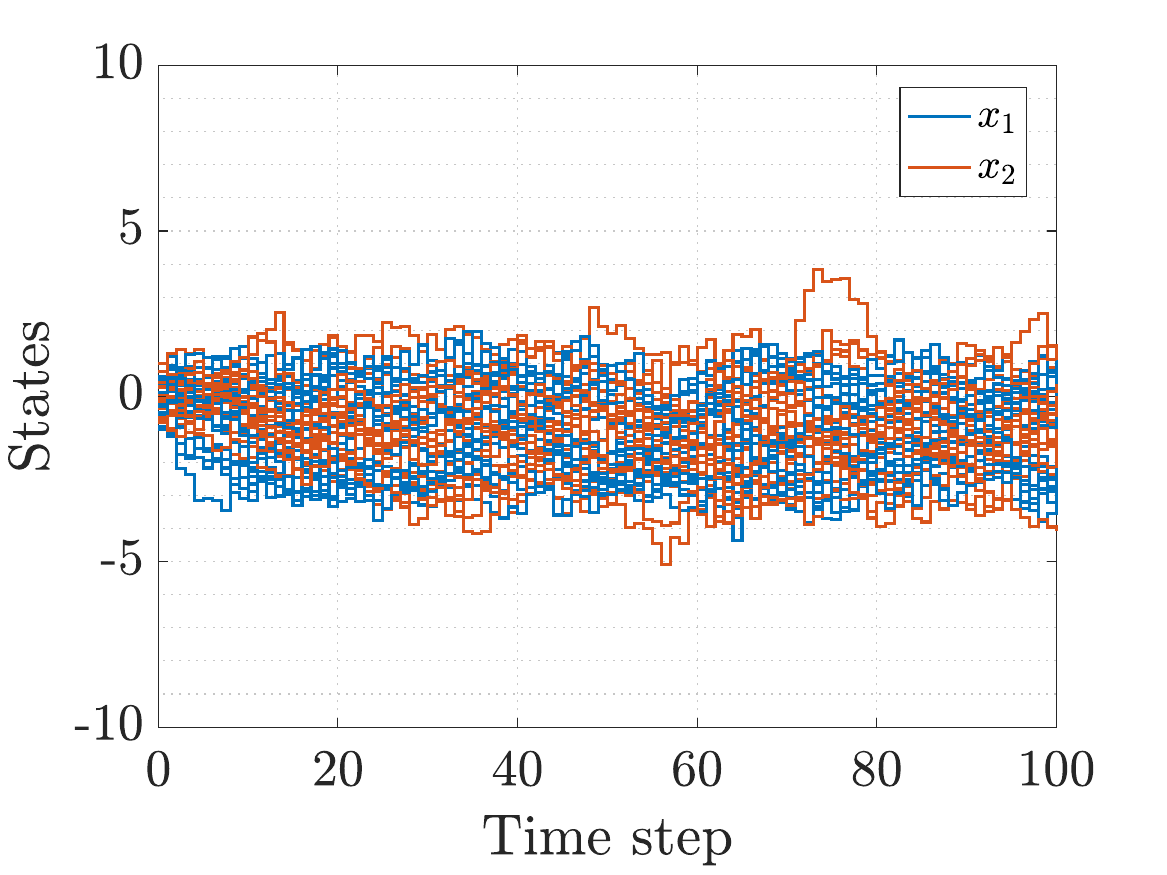}
		\caption{}
		\label{fig:J2}
	\end{subfigure}
	\hfill
	\begin{subfigure}[t]{0.25\textwidth}
		\centering
		\includegraphics[width=\linewidth]{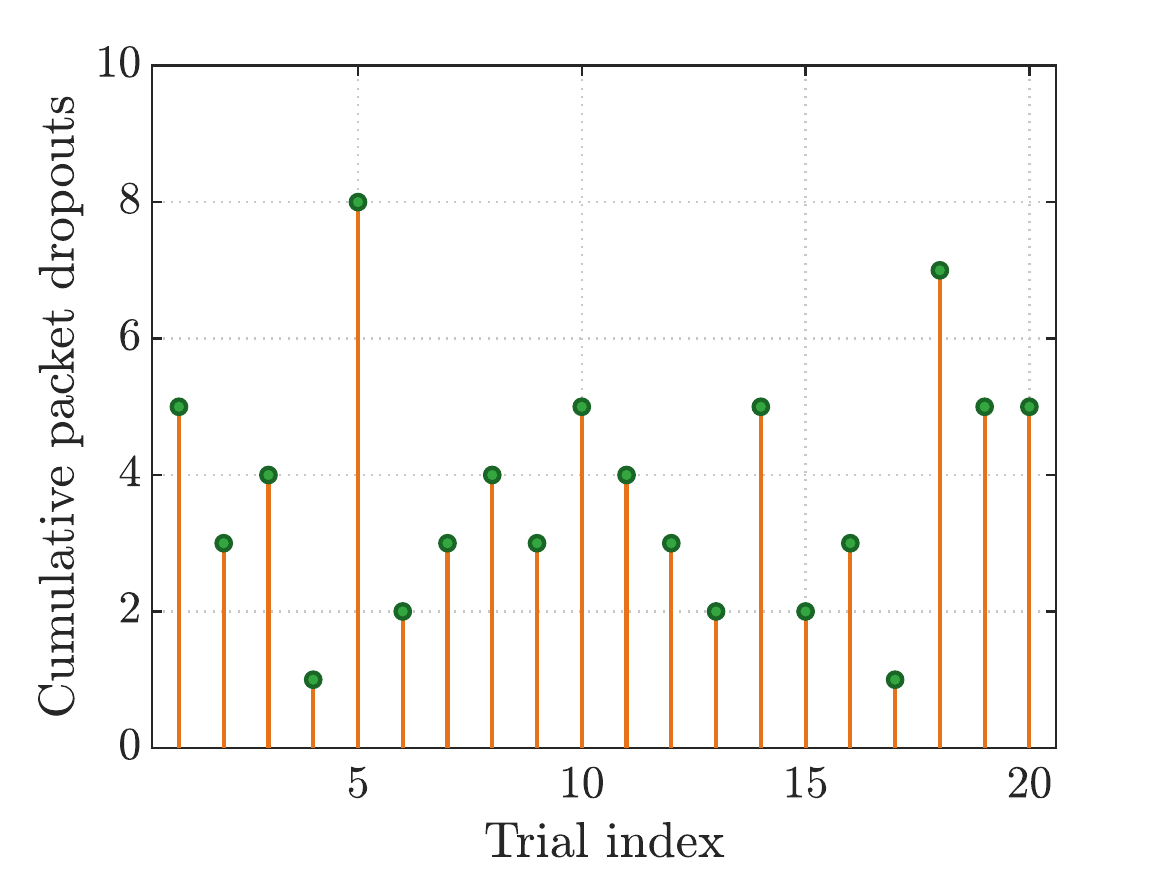}
		\caption{}
		\label{fig:J1}
	\end{subfigure}
	\caption{\textbf{Jet engine compressor}. Plot (a) displays the trajectories under a random controller, while plot (b) displays the trajectories under the designed PSC in~\eqref{cont_jet}, starting from different initial conditions within $\mathbb{X}_a \subset [-1,1]^2$. Both simulations are generated using $20$ distinct arbitrary noise realizations. Plot (c) depicts the trajectories over safety horizon $\mathcal{T}=100$, demonstrating compliance with the specified safety property~$\Upsilon$. Plot (d) shows the cumulative packet dropouts over the safety horizon $\mathcal{T}=100$ for each of the $20$ trials, corresponding to different arbitrary noise realizations.}
	\label{fig:Atraj}
\end{figure*}

\subsection{Case Study 2: Jet Engine Compressor}\label{Case_study_2}
\vspace{-1mm}
In the second case study, we examine a physical polynomial stochastic system, a jet engine compressor~\cite{anta2010sample}, with dynamics given by
\begin{align}\notag
	{x}_1(k+1)& = x_1(k) - 0.1x_2(k) - 0.15 x^2_1(k)- 0.05 x^3_1(k)\\\notag &~~~ +0.4w_1(k),\\\notag
	{x}_2(k+1) & \!=\! x_2(k) \!+\! 0.1x_1(k)  \!+\! 0.1u(k) \!+\! 0.4w_2(k),\\\label{jet}
	y(k) &= [x_1(k);x_2(k)],
\end{align}
where $x_1$ denotes the mass flow, $x_2$ is the pressure rise, and $u$ represents the control input. The system described in \eqref{jet} can be represented in the plant form as outlined in Definition~\ref{Plant_model}, along with its matrices $	C=  \mathbf{I}_2,$ $D = 0.4 \times \mathbf{I}_2,$ and
\begin{align*}
	{A}(x) \!=\! \begin{bmatrix}
		1-0.15x_1-0.05x_1^2 & -0.1\\
		0.1 & 1
	\end{bmatrix}\!\!,\quad
	G \!=\!	\begin{bmatrix}
		0 \\
		0.1
	\end{bmatrix}\!\!.
\end{align*}
Unlike the first case study, where the maximum degree of the system monomials was $2$, the jet engine compressor features a maximum monomial degree of $3$. The time-invariant uplink delay and the downlink packet dropout rate are given as $\tau=2$ and $p=0.04$, respectively. 

The regions of interest are defined as $\mathbb{X} = [-10,10]^2$, $\mathbb{X}_a = [-1,1]^2$, and $\mathbb{X}_b = [4,10]\times[-10,-6] \cup [-10,-4] \times [6,10]$. Given an allowable outage probability of $\alpha = 0.054$ and the time horizon $\mathcal{T}=100$, \eqref{eq:N_design} provides a control input prediction horizon of $h = 3$. Subsequently, by satisfying condition \eqref{eq:SOS-inverse-condition}, we compute the CBC matrix $P \in \mathbb{R}^{9 \times 9}$ and derive its corresponding PSC as
\begin{equation}\label{cont_jet}
	\boldsymbol{u} \!=\! [\nu;\nu_1;\nu_2],
\end{equation}
where
\begin{align*}
	\nu =& 0.003x_{\tau1}^2x_{\tau2}+0.002x_{\tau2}^3-0.03x_{\tau1}^2 \nonumber\\
	&-0.02x_{\tau2}^2-0.2x_{\tau1}-x_{\tau2}, \nonumber\\[0.3em]
	\nu_1 =& 0.006x_{\tau1}^2x_{\tau2}+0.005x_{\tau2}^3-0.03x_{\tau1}^2 \nonumber\\
	&-0.02x_{\tau2}^2-0.1x_{\tau1}-0.49x_{\tau2},\\
	\nu_2 =& 0.009x_{\tau1}^2x_{\tau2}+0.007x_{\tau2}^3-0.03x_{\tau1}^2 \nonumber\\
	&-0.02x_{\tau2}^2-0.06x_{\tau1}-0.33x_{\tau2}.
\end{align*}
We subsequently design
\(\gamma_a =  4.06 \times 10^{5}\), \(\gamma_b = 1.44 \times 10^{7}\), and $\eta= 9.39 \times 10^{3}$. Therefore, by Theorem~\ref{Th: safety}, the system in \eqref{jet} is guaranteed to be safe with a probability of at least $90\%$ over a finite time horizon $\mathcal{T}=100$, in the presence of uplink time-invariant delay and downlink packet dropouts. Simulation results for the jet engine system in \eqref{jet} are illustrated in Fig.~\ref{fig:Atraj}.

\begin{figure*}[t!]
	\centering
	\begin{subfigure}[t]{0.23\textwidth}
		\centering
		\includegraphics[width=\linewidth]{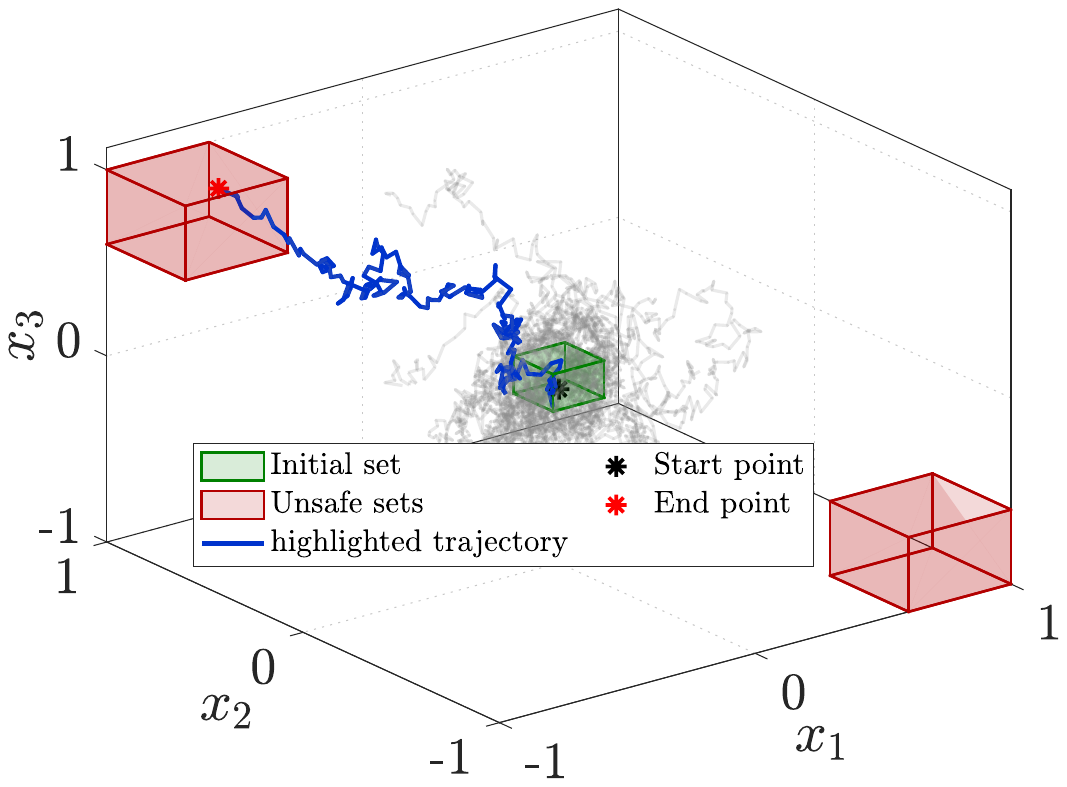}
		\caption{}
		\label{fig:V4}
	\end{subfigure}
	\hfill
	\begin{subfigure}[t]{0.23\textwidth}
		\centering
		\includegraphics[width=\linewidth]{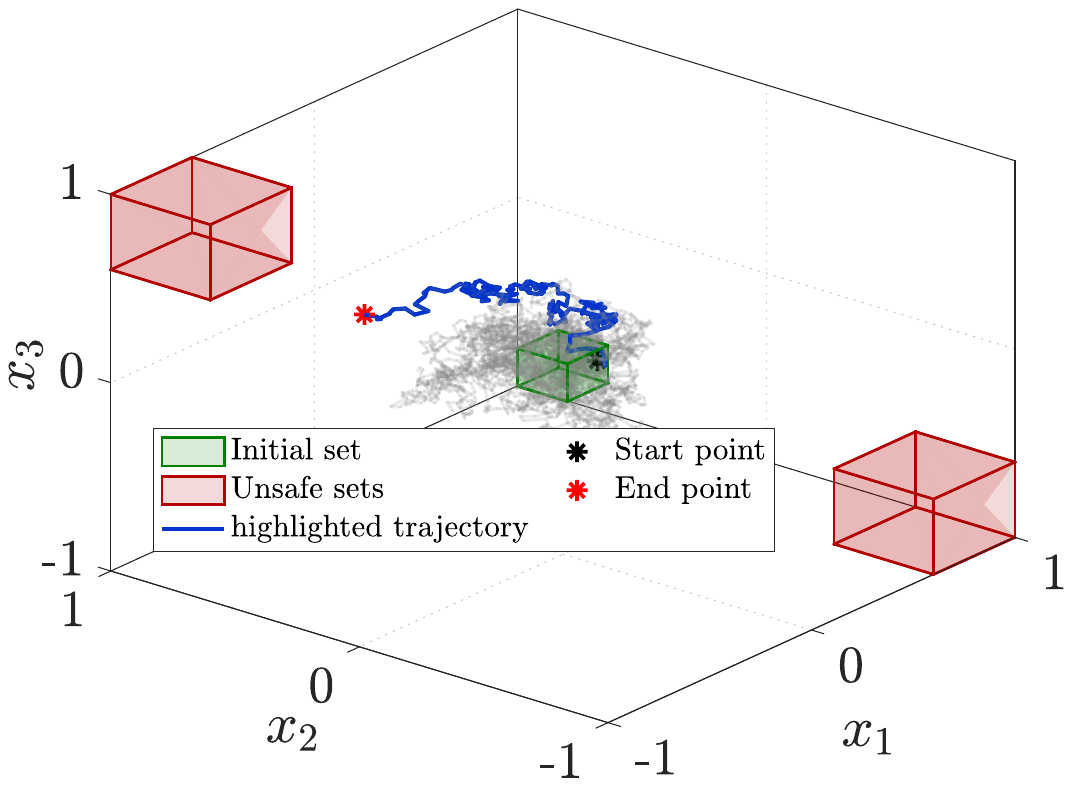}
		\caption{}
		\label{fig:V3}
	\end{subfigure}
	\hfill
	\begin{subfigure}[t]{0.25\textwidth}
		\centering
		\includegraphics[width=\linewidth]{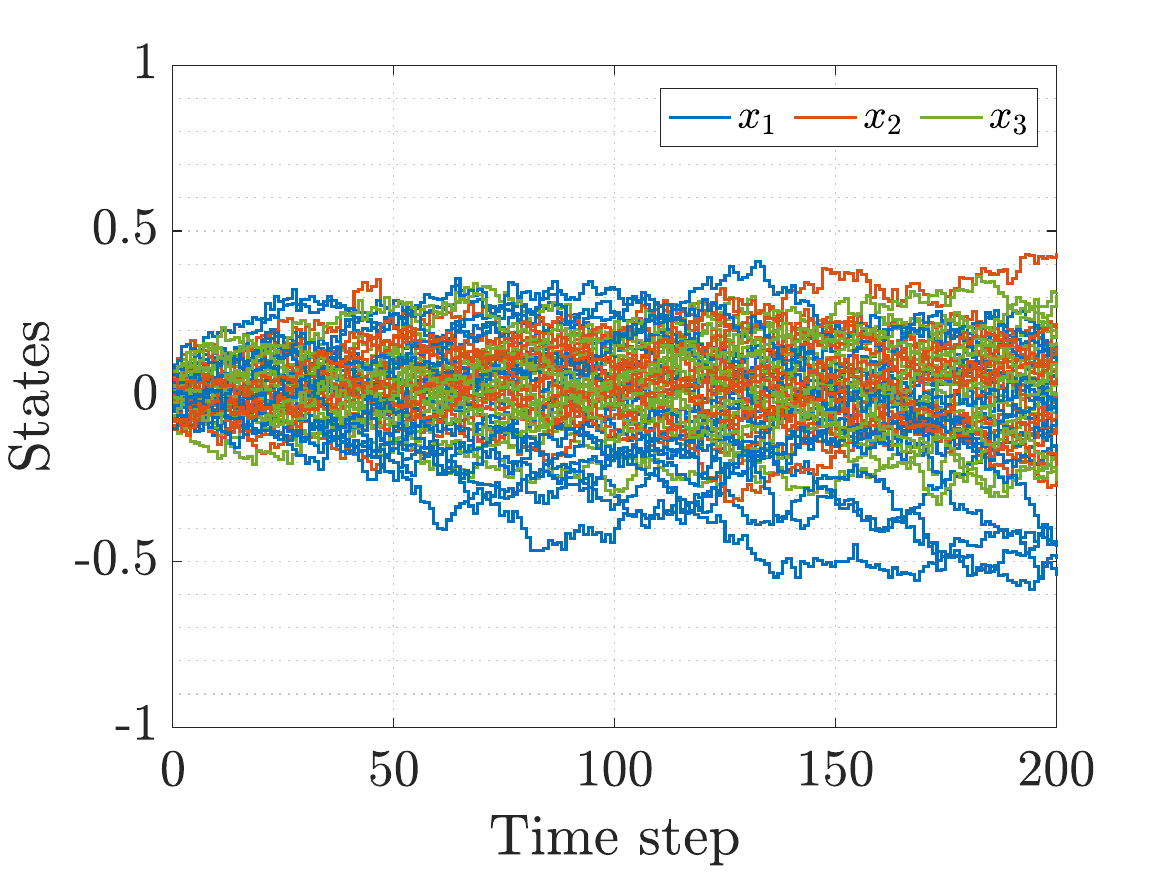}
		\caption{}
		\label{fig:V2}
	\end{subfigure}
	\hfill
	\begin{subfigure}[t]{0.25\textwidth}
		\centering
		\includegraphics[width=\linewidth]{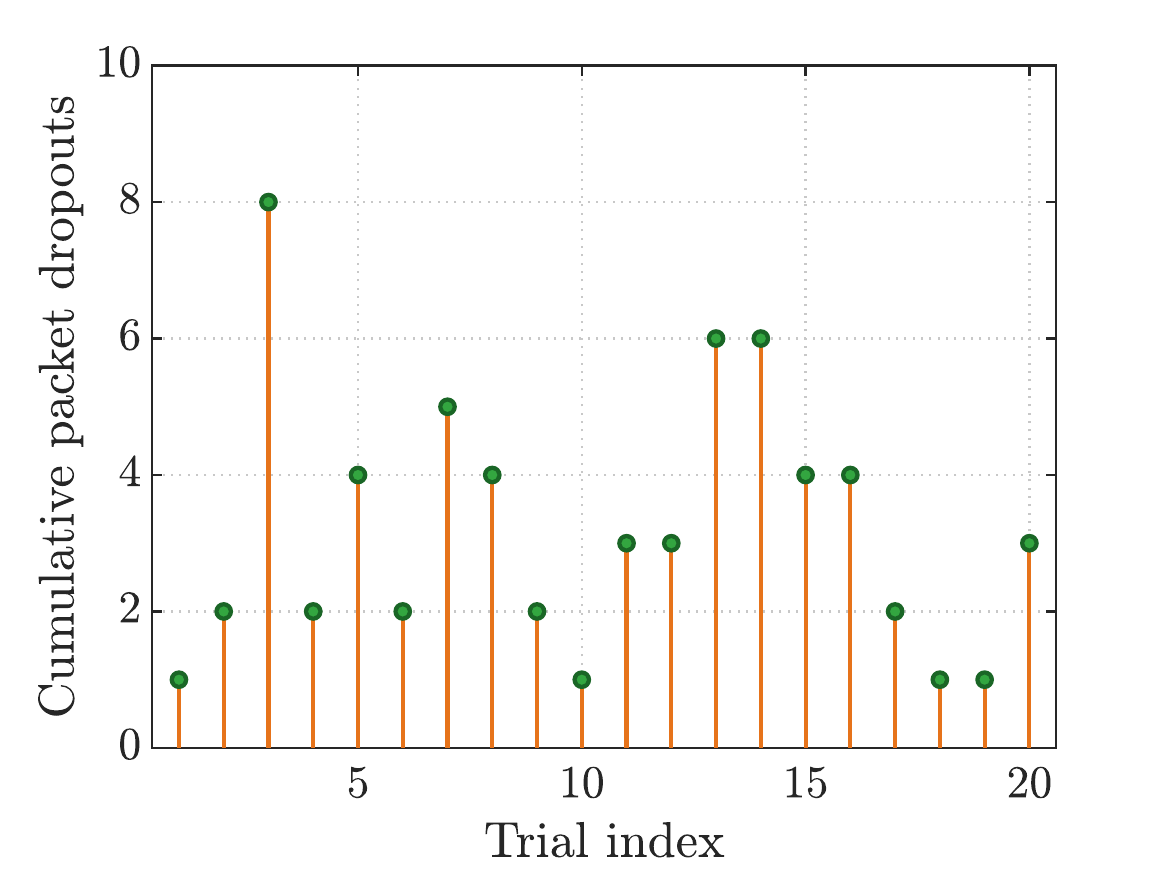}
		\caption{}
		\label{fig:V1}
	\end{subfigure}
	\caption{\textbf{Van der Pol oscillator}. Plot~(a) shows the trajectories under a random controller, while plot~(b) presents the trajectories obtained under the designed PSC in~\eqref{cont_V}, for different initial conditions in $\mathbb{X}_a \subset [-0.1,0.1]^3$. In both cases, the simulations are carried out using $20$ different arbitrary noise realizations. Plot~(c) depicts the system trajectories over the safety horizon $\mathcal{T}=200$, confirming satisfaction of the safety specification~$\Upsilon$. Plot~(d) displays, for each of the $20$ trials, the cumulative number of packet dropouts over the same safety horizon, where each trial corresponds to a distinct arbitrary noise realization.}
	\label{fig:Vtraj}
\end{figure*}
\begin{figure*}[t]
	\centering
	\subfloat[Academic system\label{fig:MC1}]{
		\includegraphics[width=0.31\textwidth]{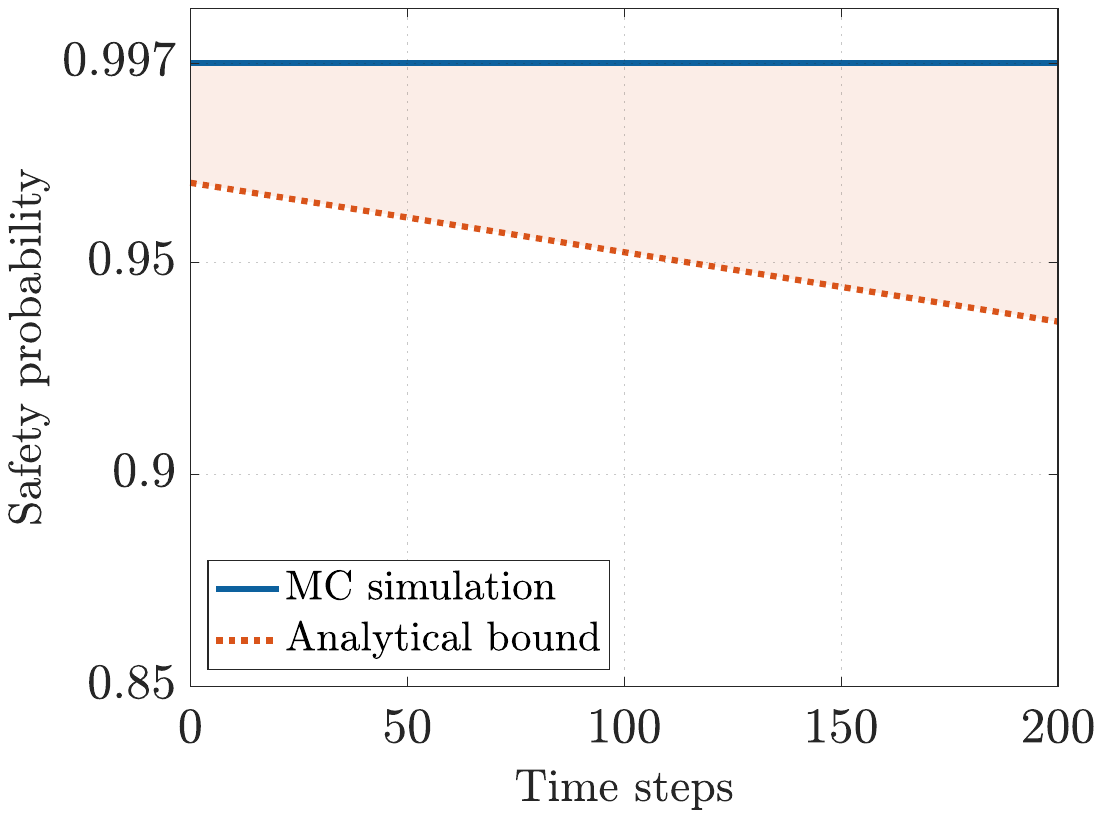}
	}
	\subfloat[Jet engine compressor\label{fig:MC2}]{
		\includegraphics[width=0.31\textwidth]{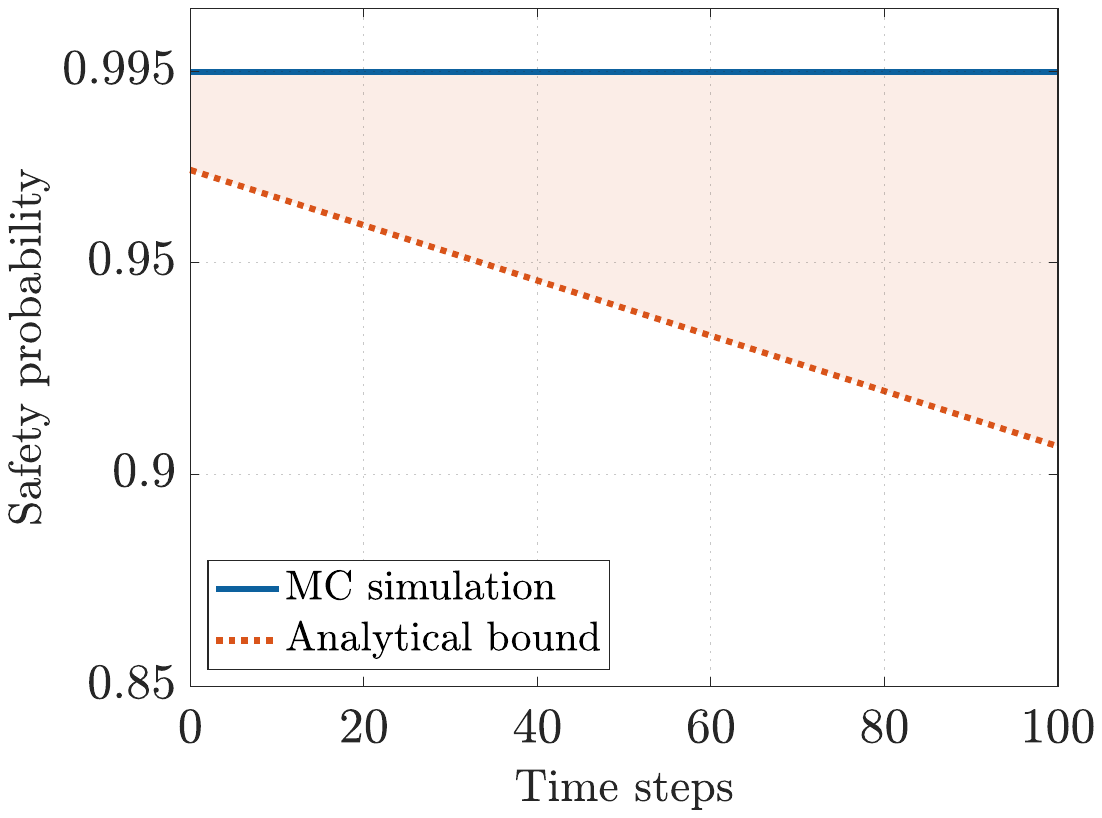}
	}
	\subfloat[Van der Pol oscillator\label{fig:MC3}]{
		\includegraphics[width=0.31\textwidth]{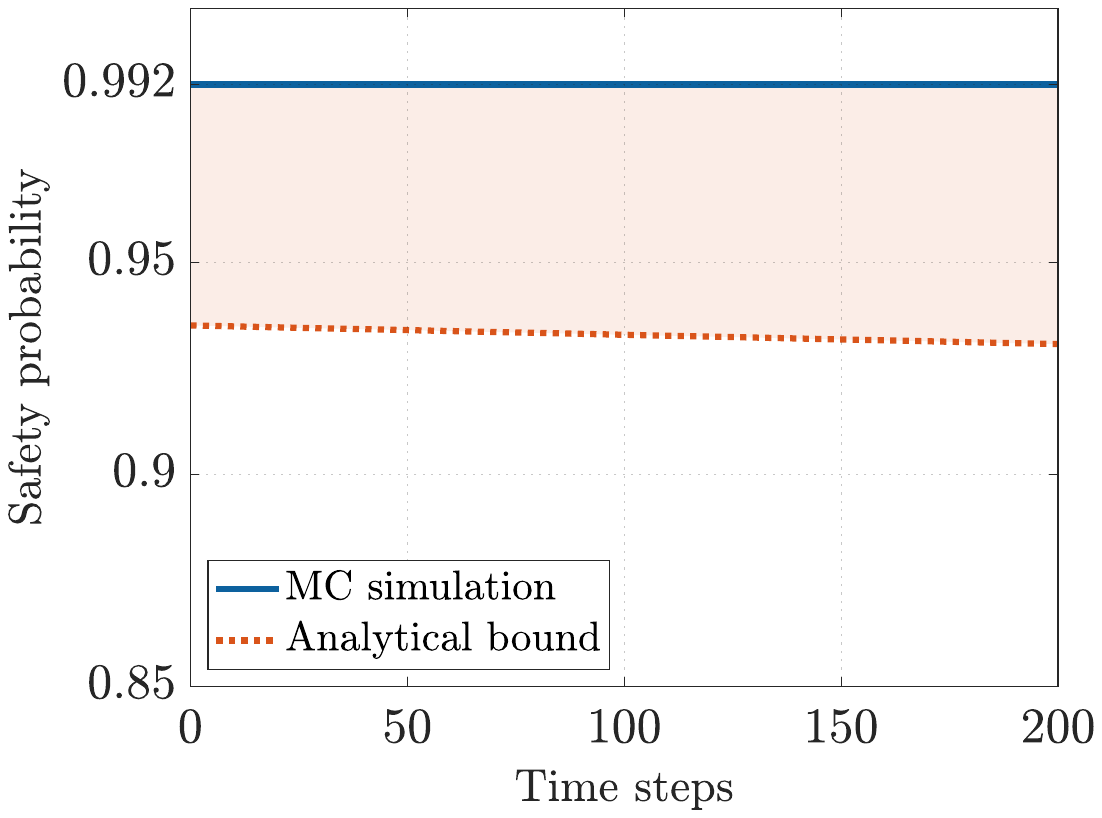}
	}
	\caption{Comparison between the analytical safety lower bounds in \eqref{bound} and MC safety probabilities for the obtained PSC. Solid curves show the empirical MC safety probabilities obtained from sampled closed-loop trajectories, while dashed curves represent the analytical safety lower bounds certified from the CBC conditions over the finite time horizons.}
	\label{fig:mc_safety_prob}
\end{figure*}

\subsection{Case Study 3: Van der Pol Oscillator}\label{Case_study_3}
As the last case study, we consider a three-dimensional Van der Pol oscillator \cite{Han_3D_VAN}, whose dynamics are given by
\begin{align} \notag
	x_1(k \!+\! 1) &= x_1(k)-0.02\,x_2(k) +0.02w_1(k), \notag\\
	x_2(k \!+\! 1) &= x_2(k)+0.008\,x_1(k)-0.021\,x_2(k)\notag\\ &~~~+0.01\,x_3(k)+0.1\,x_1^2(k)x_2(k) +0.02w_2(k), \notag\\
	x_3(k \!+\! 1) &= 0.99\,x_3(k) \!+\! 0.01\,x_3^3(k) \!+\! 0.005\,u(k) \!+\! 0.02w_3(k), \notag\\
	y(k) &= x_1(k) + x_2(k) + x_3(k).
	\label{vdp3d}
\end{align}
The system in~\eqref{vdp3d} can be expressed in the form of the plant in Definition~\ref{Plant_model}, with the relevant matrices given as $D= 0.02 \times \mathbf{I}_3,$ $C=[
1 \; 1 \;1 ],$ and
\begin{align*}
	A(x) \!\!=\!\!
	\begin{bmatrix}
		1 & -0.02 & 0\\
		0.008 & 0.979+0.1x_1^2 & 0.01\\
		0 & 0 & 0.99+0.01x_3^2
	\end{bmatrix}\!\!,
	G \!\!=\!\!
	\begin{bmatrix}
		0\\
		0\\
		0.005
	\end{bmatrix}\!\!.
\end{align*}
Unlike the previous two case studies, this example is three-dimensional and contains both mixed and cubic polynomial nonlinearities. The time-invariant uplink delay and the downlink packet dropout rate are given as $\tau=2$ and $p=0.015$, respectively. 

The regions of interest are selected as \(\mathbb{X}=[-1,1]^3\), \(\mathbb{X}_a=[-0.1,0.1]^3\), and \(\mathbb{X}_b=[0.6,1] \times [-1,-0.6]^2 \cup [-1,-0.6] \times [0.6,1]^2 \). Given \(\alpha=0.09\) and $\mathcal{T}=200$, we compute \(h = 2\) using~\eqref{eq:N_design}. By enforcing condition~\eqref{eq:SOS-inverse-condition}, the CBC matrix \(P\in \mathbb{R}^{11 \times 11}\) is computed and the corresponding PSC is synthesized as
\begin{align}\label{cont_V}
	\boldsymbol{u} \!=\! [\nu;\nu_1],
\end{align}
where
\begin{align*}
	\nu =& -0.0005x_{\tau1}^2-0.001x_{\tau1}x_{\tau2}-0.001x_{\tau1}x_{\tau3} \nonumber\\
	&-0.0005x_{\tau2}^2-0.001x_{\tau2}x_{\tau3}-0.0005x_{\tau3}^2 \nonumber\\
	&-0.039948(x_{\tau1}+x_{\tau2}+x_{\tau3}),\\
	\nu_1 =& -0.001x_{\tau1}^2-0.002x_{\tau1}x_{\tau2}-0.002x_{\tau1}x_{\tau3} \nonumber\\
	&-0.001x_{\tau2}^2-0.002x_{\tau2}x_{\tau3}-0.001x_{\tau3}^2 \nonumber\\
	&-0.020339(x_{\tau1}+x_{\tau2}+x_{\tau3}).
\end{align*}
Using the obtained matrix  and conditions~\eqref{eq:gamma-a-inverse}--\eqref{eq:gamma-b-inverse}, we design \(\gamma_a=13.34\), \(\gamma_b=427.35\), and \(\eta=0.07\). Hence, by Theorem~\ref{Th: safety}, the system in~\eqref{vdp3d} satisfies the safety specification under the designed controller with a probability of at least $93\%$ for $\mathcal{T}=200$, in the presence of uplink time-invariant delay and downlink packet dropouts. The corresponding simulation results for this case study are presented in Fig.~\ref{fig:Vtraj}.

To assess the conservatism of the analytical safety probability bounds in \eqref{bound}, we compare them with empirical Monte Carlo (MC) safety probabilities for all case studies; see Fig.~\ref{fig:mc_safety_prob}. The number of MC simulations is set to \(10^{5}\), corresponding to independent sampled closed-loop trajectories under the designed PSC. The empirical MC safety probability is at least \(99\%\) in all case studies, exceeding the analytical lower bounds in \eqref{bound}, thereby illustrating the conservative nature of the proposed CBC-based guarantees. Such conservatism is expected, as the analytical bounds are derived from Lyapunov-type sufficient conditions that provide \emph{formal} safety guarantees, whereas MC simulations yield only empirical safety estimates based on a finite number of sampled trajectories. This observation highlights potential opportunities for tightening the theoretical guarantees. In particular, employing higher-degree polynomial CBCs, rather than the quadratic CBCs considered in this work, may lead to less conservative safety conditions and improved analytical safety lower bounds, which is left as future work.

\section{Conclusion}\label{sec: Conclusion}
This work developed a framework for synthesizing safety controllers for discrete-time stochastic networked control systems with time-invariant uplink delays and Bernoulli downlink packet losses, where the underlying plants exhibited polynomial dynamics. To account for these communication imperfections, a plant-side packetized-control buffer and an augmented-state representation were introduced, enabling the enforcement of safety requirements through CBCs and the establishment of quantified probabilistic safety guarantees. The safety constraints were reformulated as SOS conditions and an associated optimization framework, enabling the systematic synthesis of both CBCs and their corresponding PSCs under communication constraints. The proposed approach was validated through three case studies, whose results demonstrated its effectiveness in maintaining safety while satisfying the quantified probabilistic guarantees. Future work will focus on extending the proposed framework to higher-degree polynomial CBCs and to systems with time-varying delays.

\bibliographystyle{ieeetr}
\bibliography{biblio}

\begin{thebibliography}{10}

\bibitem{Hespanha}
J.~P. Hespanha, P.~Naghshtabrizi, and Y.~Xu, ``A survey of recent results in
  networked control systems,'' {\em Proceedings of the IEEE}, vol.~95, no.~1,
  pp.~138--162, 2007.

\bibitem{adimoolam}
A.~Adimoolam and T.~Dang, ``{Safety Verification of Networked Control Systems
  by Complex Zonotopes},'' {\em Leibniz Transactions on Embedded Systems},
  vol.~8, no.~2, pp.~1--22, 2022.

\bibitem{prajna2004safety}
S.~Prajna and A.~Jadbabaie, ``Safety verification of hybrid systems using
  barrier certificates,'' in {\em Proceedings of the International Workshop on
  Hybrid Systems: Computation and Control (HSCC)}, pp.~477--492, 2004.

\bibitem{borrmann2015control}
U.~Borrmann, L.~Wang, A.~D. Ames, and M.~Egerstedt, ``Control barrier
  certificates for safe swarm behavior,'' {\em IFAC-PapersOnLine}, vol.~48,
  no.~27, pp.~68--73, 2015.

\bibitem{ames2019control}
A.~D. Ames, S.~Coogan, M.~Egerstedt, G.~Notomista, K.~Sreenath, and P.~Tabuada,
  ``Control barrier functions: Theory and applications,'' in {\em Proceedings
  of the 18th European Control Conference (ECC)}, pp.~3420--3431, 2019.

\bibitem{santoyo2021barrier}
C.~Santoyo, M.~Dutreix, and S.~Coogan, ``A barrier function approach to
  finite-time stochastic system verification and control,'' {\em Automatica},
  vol.~125, 2021.

\bibitem{nejati2024context}
A.~Nejati, S.~Prakash~Nayak, and A.-K. Schmuck, ``Context-triggered games for
  reactive synthesis over stochastic systems via control barrier
  certificates,'' in {\em Proceedings of the 27th ACM International Conference
  on Hybrid Systems: Computation and Control}, pp.~1--12, 2024.

\bibitem{ZAKER2026113082}
M.~Zaker, O.~Akbarzadeh, B.~Samari, and A.~Lavaei, ``Compositional design of
  safety controllers for large-scale stochastic hybrid systems,'' {\em
  Automatica}, vol.~190, 2026.

\bibitem{lavaei2024scalable}
A.~Lavaei and E.~Frazzoli, ``Scalable synthesis of safety barrier certificates
  for networks of stochastic switched systems,'' {\em IEEE Transactions on
  Automatic Control}, vol.~69, no.~11, pp.~7294--7309, 2024.

\bibitem{lavaei2022automated}
A.~Lavaei, S.~Soudjani, A.~Abate, and M.~Zamani, ``Automated verification and
  synthesis of stochastic hybrid systems: {A} survey,'' {\em Automatica},
  vol.~146, 2022.

\bibitem{Amesdelay2023}
T.~G. Molnar, A.~K. Kiss, A.~D. Ames, and G.~Orosz, ``{S}afety-{C}ritical
  {C}ontrol with {I}nput {D}elay in {D}ynamic {E}nvironment,'' {\em IEEE
  Transactions on Control Systems Technology}, vol.~31, no.~4, pp.~1507--1520,
  2023.

\bibitem{REN2022}
W.~Ren, R.~M. Jungers, and D.~V. Dimarogonas, ``Razumikhin and {K}rasovskii
  {A}pproaches for {S}afe {S}tabilization,'' {\em Automatica}, vol.~146, 2022.

\bibitem{akbarzadeh2026safety}
O.~Akbarzadeh, M.~H. Ashoori, A.~Nejati, and A.~Lavaei, ``Safety controller
  synthesis for stochastic polynomial time-delayed systems,'' {\em arXiv:
  2602.06569}, 2026.

\bibitem{Akbarzadeh_Packet}
O.~Akbarzadeh, S.~Soudjani, and A.~Lavaei, ``Safety barrier certificates for
  stochastic control systems with wireless communication networks,'' in {\em
  IEEE 63rd Conference on Decision and Control (CDC)}, pp.~5185--5190, 2024.

\bibitem{Akbarzadeh_CoDIT}
O.~Akbarzadeh, A.~Nejati, and A.~Lavaei, ``Data-driven safety controller
  synthesis for unknown systems with wireless communication networks,'' in {\em
  10th International Conference on Control, Decision and Information
  Technologies (CoDIT)}, pp.~2229--2234, 2024.

\bibitem{4118454}
J.~Baillieul and P.~J. Antsaklis, ``Control and {C}ommunication {C}hallenges in
  {N}etworked {R}eal-time {S}ystems,'' {\em Proceedings of the IEEE}, vol.~95,
  no.~1, pp.~9--28, 2007.

\bibitem{4118476}
L.~Schenato, B.~Sinopoli, M.~Franceschetti, K.~Poolla, and S.~S. Sastry,
  ``{F}oundations of {C}ontrol and {E}stimation {O}ver {L}ossy {N}etworks,''
  {\em Proceedings of the IEEE}, vol.~95, no.~1, pp.~163--187, 2007.

\bibitem{ip-cta_20050178}
T.~Yang, ``Networked {C}ontrol {S}ystem: {A} {B}rief {S}urvey,'' {\em IEE
  Proceedings - Control Theory and Applications}, vol.~153, pp.~403--412(9),
  2006.

\bibitem{5779066}
F.~Ferrari, M.~Zimmerling, L.~Thiele, and O.~Saukh, ``Efficient {N}etwork
  {F}looding and {T}ime {S}ynchronization with {G}lossy,'' in {\em ACM/IEEE
  International Conference on Information Processing in Sensor Networks},
  pp.~73--84, 2011.

\bibitem{6004816}
Y.~Mo and B.~Sinopoli, ``Kalman {F}iltering with {I}ntermittent {O}bservations:
  {T}ail {D}istribution and {C}ritical {V}alue,'' {\em IEEE Transactions on
  Automatic Control}, vol.~57, no.~3, pp.~677--689, 2012.

\bibitem{8405590}
D.~Maity, M.~H. Mamduhi, S.~Hirche, K.~H. Johansson, and J.~S. Baras, ``Optimal
  {LQG} {C}ontrol {U}nder {D}elay-dependent {C}ostly {I}nformation,'' {\em IEEE
  Control Systems Letters}, vol.~3, no.~1, pp.~102--107, 2019.

\bibitem{yu2005stabilization}
M.~Yu, L.~Wang, T.~Chu, and F.~Hao, ``Stabilization of networked control
  systems with data packet dropout and transmission delays: Continuous-time
  case,'' {\em European Journal of Control}, vol.~11, no.~1, pp.~40--49, 2005.

\bibitem{9462479}
D.~Maity, M.~H. Mamduhi, S.~Hirche, and K.~H. Johansson, ``Optimal {LQG}
  {C}ontrol of {N}etworked {S}ystems {U}nder {T}raffic-correlated {D}elay and
  {D}ropout,'' {\em IEEE Control Sys. Letters}, vol.~6, pp.~1280--1285, 2022.

\bibitem{maggio_et_al}
M.~Maggio, A.~Hamann, E.~Mayer-John, and D.~Ziegenbein, ``Control-{S}ystem
  {S}tability {U}nder {C}onsecutive {D}eadline {M}isses {C}onstraints,'' in
  {\em Euromicro Conference on Real-Time Systems (ECRTS)}, vol.~165,
  pp.~21:1--21:24, 2020.

\bibitem{Trimpe-stability1}
F.~Mager, D.~Baumann, R.~Jacob, L.~Thiele, S.~Trimpe, and M.~Zimmerling,
  ``Feedback {C}ontrol {G}oes {W}ireless: {G}uaranteed {S}tability {O}ver
  {L}ow-power {M}ulti-hop {N}etworks,'' in {\em ACM/IEEE International
  Conference on Cyber-Physical Systems}, p.~97–108, 2019.

\bibitem{Trimpe-stability2}
F.~Mager, D.~Baumann, C.~Herrmann, S.~Trimpe, and M.~Zimmerling, ``Scaling
  {B}eyond {B}andwidth {L}imitations: Wireless {C}ontrol with {S}tability
  {G}uarantees {U}nder {O}verload,'' {\em ACM Transactions on Cyber-Physical
  Systems (TCPS)}, vol.~6, no.~3, pp.~1--30, 2022.

\bibitem{mamduhi2021delay}
M.~H. Mamduhi, D.~Maity, S.~Hirche, J.~S. Baras, and K.~H. Johansson,
  ``Delay-sensitive joint optimal control and resource management in multiloop
  networked control systems,'' {\em IEEE Transactions on Control of Network
  Systems}, vol.~8, no.~3, pp.~1093--1106, 2021.

\bibitem{mamduhi2025networkaware}
M.~H. Mamduhi and D.~Maity, ``Network-aware optimal sampling for stochastic
  control systems over dynamic networks,'' {\em IEEE Control Systems Letters},
  vol.~9, pp.~1808--1813, 2025.

\bibitem{klugel2019aoi}
M.~Klügel, M.~H. Mamduhi, S.~Hirche, and W.~Kellerer, ``{A}o{I}-penalty
  minimization for networked control systems with packet loss,'' in {\em IEEE
  INFOCOM - IEEE Conference on Computer Communications Workshops (INFOCOM
  WKSHPS)}, pp.~189--196, 2019.

\bibitem{zacchialun2025optimal}
Y.~{Zacchia Lun}, F.~Smarra, and A.~D’Innocenzo, ``Optimal control over
  markovian wireless communication channels under generalized packet dropout
  compensation,'' {\em Automatica}, vol.~176, 2025.

\bibitem{zacchialun2025wireless}
Y.~Zacchia~Lun, F.~Santucci, and A.~D’Innocenzo, ``Wireless control with
  channel state detection and message dropout compensation,'' {\em IEEE Control
  Systems Letters}, vol.~9, pp.~1399--1404, 2025.

\bibitem{jungers2018observability}
R.~M. Jungers, A.~Kundu, and W.~P. M.~H. Heemels, ``Observability and
  controllability analysis of linear systems subject to data losses,'' {\em
  IEEE Transactions on Automatic Control}, vol.~63, no.~10, pp.~3361--3376,
  2018.

\bibitem{dey2014remote}
S.~Dey, A.~Chiuso, and L.~Schenato, ``Remote estimation with noisy measurements
  subject to packet loss and quantization noise,'' {\em IEEE Transactions on
  Control of Network Systems}, vol.~1, no.~3, pp.~204--217, 2014.

\bibitem{leong2008kalman}
A.~S. Leong, S.~Dey, and J.~S. Evans, ``On {K}alman smoothing with random
  packet loss,'' {\em IEEE Transactions on Signal Processing}, vol.~56, no.~7,
  pp.~3346--3351, 2008.

\bibitem{maass2020stochastic}
A.~I. Maass, D.~Ne{\v{s}}i{\'c}, V.~S. Varma, R.~Postoyan, and S.~Lasaulce,
  ``Stochastic stabilisation and power control for nonlinear feedback loops
  communicating over lossy wireless networks,'' in {\em IEEE Conference on
  Decision and Control (CDC)}, pp.~1866--1871, 2020.

\bibitem{BALAGHII201858}
M.~H. {Balaghi I.}, D.~J. Antunes, M.~H. Mamduhi, and S.~Hirche, ``An optimal
  {LQG} controller for stochastic event-triggered scheduling over a lossy
  communication network,'' {\em IFAC-PapersOnLine}, vol.~51, no.~23,
  pp.~58--63, 2018.
\newblock 7th IFAC Workshop on Distributed Estimation and Control in Networked
  Systems NECSYS.

\bibitem{QU2015974}
F.-L. Qu, Z.-H. Guan, D.-X. He, and M.~Chi, ``Event-triggered control for
  networked control systems with quantization and packet losses,'' {\em Journal
  of the Franklin Institute}, vol.~352, no.~3, pp.~974--986, 2015.

\bibitem{mamduhi:MTNS2014}
M.~Mamduhi, A.~Molin, and S.~Hirche, ``Event-based scheduling of multi-loop
  stochastic systems over shared communication channels,'' in {\em 21st
  International Symposium on Mathematical Theory of Networks and Systems
  (MTNS)}, pp.~266--273, 2014.

\bibitem{dhullipalla2025event}
M.~Dhullipalla and D.~V. Dimarogonas, ``Event-triggered predictor-based control
  of networked systems with input delays,'' {\em IEEE Control Systems Letters},
  vol.~9, pp.~2223--2228, 2025.

\bibitem{QuevedoNesic2010NOLCOS}
D.~E. Quevedo and D.~Ne{\v{s}}i{\'c}, ``On stochastic stability of packetized
  predictive control of non-linear systems over erasure channels*,'' {\em IFAC
  Proceedings Volumes}, vol.~43, no.~14, pp.~557--562, 2010.

\bibitem{QuevedoNesic2011ISS}
D.~E. Quevedo and D.~Ne{\v{s}}i{\'c}, ``Input-to-state stability of packetized
  predictive control over unreliable networks affected by packet-dropouts,''
  {\em IEEE Transactions on Automatic Control}, vol.~56, no.~2, pp.~370--375,
  2011.

\bibitem{QuevedoOstergaardNesic2011BitRate}
D.~E. Quevedo, J.~{\O}stergaard, and D.~Nesic, ``Packetized predictive control
  of stochastic systems over bit-rate limited channels with packet loss,'' {\em
  IEEE Transactions on Automatic Control}, vol.~56, no.~12, pp.~2854--2868,
  2011.

\bibitem{Akbarzadeh_Networked}
O.~Akbarzadeh, M.~H. Mamduhi, and A.~Lavaei, ``Safety controller synthesis for
  stochastic networked systems under communication constraints,'' in {\em IEEE
  64th Conference on Decision and Control (CDC)}, pp.~3333--3338, 2025.

\bibitem{BENGTSSON}
F.~Bengtsson and T.~Wik, ``Stochastic optimal control over unreliable
  communication links,'' {\em Automatica}, vol.~142, 2022.

\bibitem{5G-mmWave}
R.~A.~K. Fezeu, E.~Ramadan, W.~Ye, B.~Minneci, J.~Xie, A.~Narayanan, A.~Hassan,
  F.~Qian, Z.~Zhang, J.~Chandrashekar, and M.~Lee, ``An in-depth measurement
  analysis of {5G} mm{W}ave {PHY} latency and its impact on {E}nd-to-{E}nd
  delay,'' pp.~284--312, Springer Nature Switzerland, 2023.

\bibitem{Popovski2019URLLC}
P.~Popovski, {\v{C}}.~Stefanovi{\'c}, J.~J. Nielsen, E.~de~Carvalho,
  M.~Angjelichinoski, K.~F. Trillingsgaard, and A.-S. Bana, ``Wireless access
  in ultra-reliable low-latency communication ({URLLC}),'' {\em IEEE
  Transactions on Communications}, vol.~67, no.~8, pp.~5783--5801, 2019.

\bibitem{Chang2019}
B.~Chang, G.~Zhao, Z.~Chen, L.~Li, and M.~A. Imran, ``Packet-drop design in
  {URLLC} for real-time wireless control systems,'' {\em IEEE Access}, vol.~7,
  pp.~183081--183090, 2019.

\bibitem{Kushner}
H.~J. Kushner, {\em Stochastic Stability and Control}.
\newblock Mathematics in Science and Engineering, Elsevier Science, 1967.

\bibitem{SOSTOOLS}
S.~Prajna, A.~Papachristodoulou, P.~Seiler, and P.~Parrilo, ``{SOSTOOLS}:
  control applications and new developments,'' in {\em Proceeding of IEEE
  International Conference on Robotics and Automation}, pp.~315--320, 2004.

\bibitem{mosek}
M.~ApS, {\em The {MOSEK} {O}ptimization {T}oolbox for {MATLAB} {M}anual.
  Version 10.1.}, 2025.

\bibitem{anta2010sample}
A.~Anta and P.~Tabuada, ``To sample or not to sample: Self-triggered control
  for nonlinear systems,'' {\em IEEE Transactions on automatic control},
  vol.~55, no.~9, pp.~2030--2042, 2010.

\bibitem{Han_3D_VAN}
W.~Han and R.~Tedrake, ``Controller synthesis for discrete-time polynomial
  systems via occupation measures,'' in {\em IEEE/RSJ International Conference
  on Intelligent Robots and Systems}, pp.~6911--6918, 2018.

\end{thebibliography}

\vspace{-1cm}

\begin{IEEEbiography}[{\includegraphics[width=1in,height=1.25in,clip,keepaspectratio]{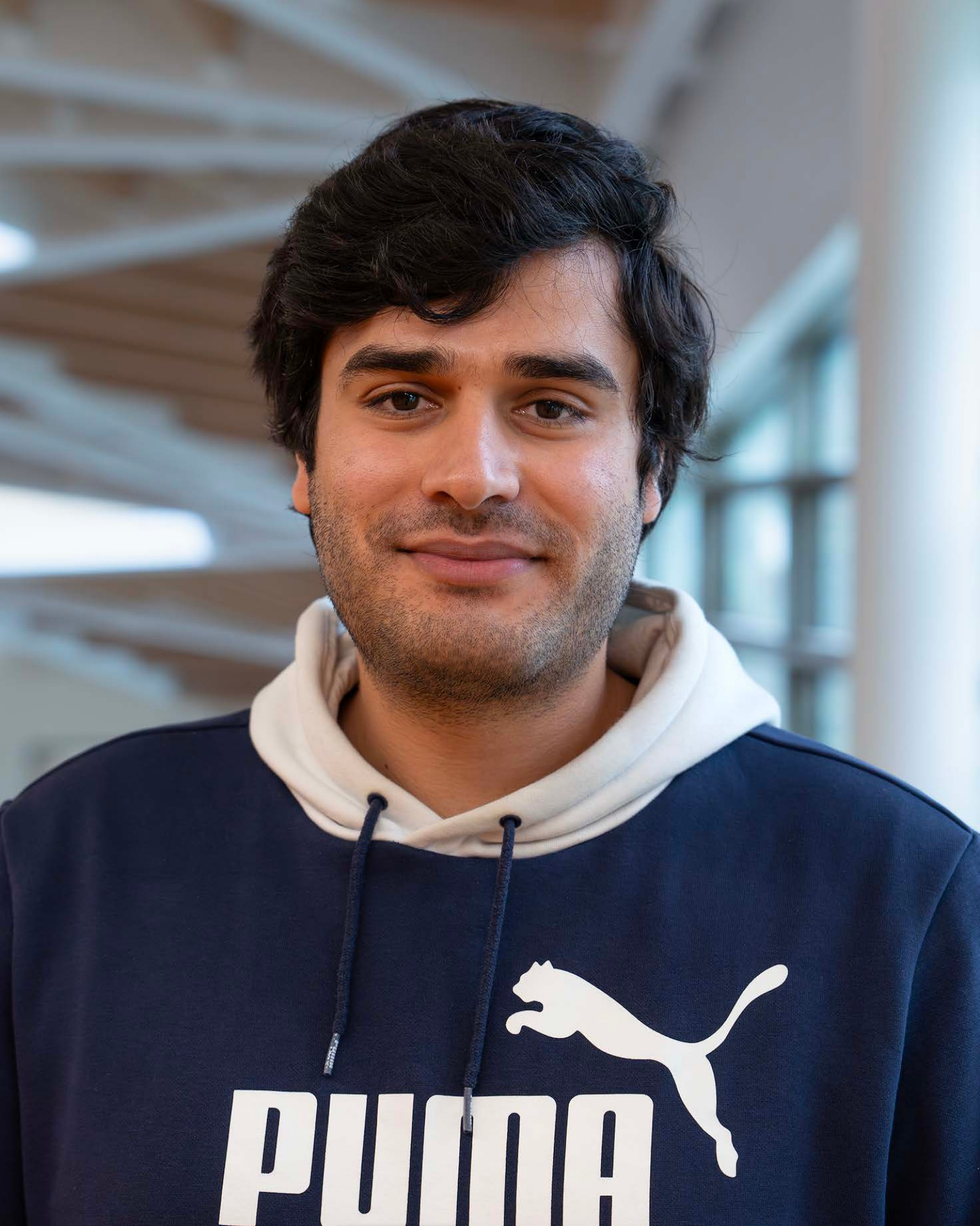}}]{Omid Akbarzadeh}~(Student Member, IEEE) is currently a PhD student in the School of Computing at Newcastle University, U.K. His academic journey commenced at Shiraz University, where he obtained a Bachelor of Science in Electrical and Electronic Engineering. Following this, he pursued a master's degree in Communications and Computer Network Engineering (CCNE) at the Polytechnic University of Turin, Italy (Politecnico di Torino). His research interests include safe cyber-physical systems, communication networks, data-driven approaches, and formal control.
\end{IEEEbiography}\vspace{-1cm}

\begin{IEEEbiography}[{\includegraphics[width=1in,height=1.3in,clip,keepaspectratio]{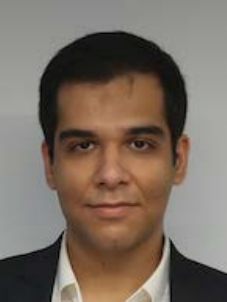}}]{MohammadHossein Ashoori}~(Student Member, IEEE) received his B.Sc. and M.Sc. degrees in Electrical Engineering from Sharif University of Technology (SUT), Tehran, Iran, in 2019 and 2022, respectively. He is currently pursuing his PhD in
	the School of Computing at Newcastle University,
	UK.  His research
	interests include cyber-physical systems (CPS), computer vision, and digital signal processing.
\end{IEEEbiography}\vspace{-1cm}

\begin{IEEEbiography}[{\includegraphics[width=1in,height=1.4in,clip,keepaspectratio]{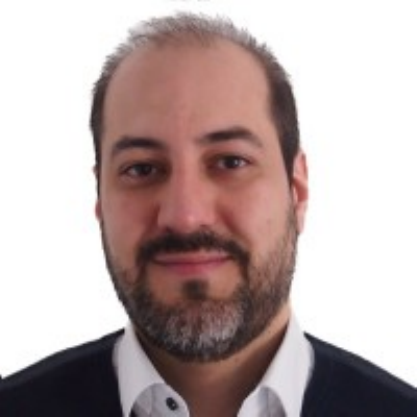}}]{Mohammad H. Mamduhi}~(Senior Member, IEEE) is an Assistant Professor in the School of Computer Science at The University of Birmingham, United Kingdom, the affiliate faculty of the Institute of Data and AI (IDAI), and a Senior IEEE Member. Prior to his current position, he was a Senior Scientist at the Automatic Control Lab, Department of Information Technology and Electrical Engineering, at ETH Zürich, Switzerland. From 2017 to 2020, he was a Postdoctoral and then a Senior Researcher at the Division of Decision and Control Systems, Department of Intelligent Systems, at KTH Royal Institute of Technology, Sweden. He received his PhD from the Department of Electrical and Computer Engineering, Technical University of Munich in Germany (2017), his MSc is Systems, Control, and Robotics from KTH Royal Institute of Technology in Sweden (2010), and his BSc in Mechanical Engineering from Sharif University of Technology (2008). His research interests include networked control systems, safe cyber-physical systems, formal methods, co-design theory, and stochastic evolutionary processes.
\end{IEEEbiography}\vspace{-1cm}

\begin{IEEEbiography}[{\includegraphics[width=1in,height=1.25in,clip,keepaspectratio]{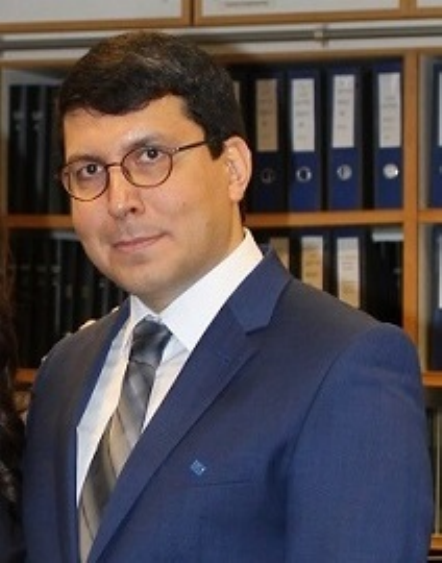}}]{Abolfazl Lavaei}~(M'17--SM'22)
 is an Assistant Professor in the School of Computing at Newcastle University, United Kingdom. Between January 2021 and July 2022, he was a Postdoctoral Associate in the Institute for Dynamic Systems and Control at ETH Zurich, Switzerland. He was also a Postdoctoral Researcher in the Department of Computer Science at LMU Munich, Germany, between November 2019 and January 2021. He received the Ph.D. degree in Electrical Engineering from the Technical University of Munich (TUM), Germany, in 2019. He obtained the M.Sc. degree in Aerospace Engineering with specialization in Flight Dynamics and Control from the University of Tehran (UT), Iran, in 2014. He is the recipient of several international awards in the acknowledgment of his work including Best Repeatability Prize (Finalist) at the ACM HSCC 2025, IFAC ADHS 2024, and IFAC ADHS 2021, HSCC Best Demo/Poster Awards 2022 and 2020, IFAC Young Author Award Finalist 2019, and Best Graduate Student Award 2014 at University of Tehran with the full GPA (20/20). His research interests revolve around the intersection of Control Theory, Formal Methods, and Statistical Learning Theory.
\end{IEEEbiography}

\end{document}